\newcommand{\blind}{0}
\newcommand{\PP}{\mathbb{P}}
\newcommand{\Nagr}{\mathcal{N}}
\newcommand{\gp}{\mathcal{GP}}
\newcommand{\EE}{\mathbb{E}}
\newcommand{\Lagr}{\mathcal{L}}
\newcommand{\Bagr}{\mathcal{B}}
\newtheorem{remark}{Remark}
\newtheorem{proposition}{Proposition}
\newtheorem{corollary}{Corollary}
\newcommand\reallywidehat[1]{%
\savestack{\tmpbox}{\stretchto{%
  \scaleto{%
    \scalerel*[\widthof{\ensuremath{#1}}]{\kern-.6pt\bigwedge\kern-.6pt}%
    {\rule[-\textheight/2]{1ex}{\textheight}}
  }{\textheight}%
}{0.5ex}}%
\stackon[1pt]{#1}{\tmpbox}%
}
\tikzset{join/.code=\tikzset{after node path={%
\ifx\tikzchainprevious\pgfutil@empty\else(\tikzchainprevious)%
edge[every join]#1(\tikzchaincurrent)\fi}}}
\tikzset{>=stealth',every on chain/.append style={join},
         every join/.style={->}}
\tikzstyle{labeled}=[execute at begin node=$\scriptstyle,
\begin{document}

\def\spacingset#1{\renewcommand{\baselinestretch}
{#1}\small\normalsize} \spacingset{1}

\if0\blind
{
  \title{\bf Deep Gaussian Process Emulation and Uncertainty Quantification for Large Computer Experiments}
  \author{Faezeh Yazdi\thanks{Corresponding author, Department of Statistics and Actuarial Science, Simon Fraser University, fyazdi@sfu.ca}~~~
  Derek Bingham\thanks{Department of Statistics and Actuarial Science, Simon Fraser University}~~~
  Daniel Williamson\thanks{Department of Mathematics, University of Exeter}}
    
  \maketitle
} \fi

\bigskip
\begin{abstract}
Computer models are used as a way to explore complex physical systems. Stationary Gaussian process emulators, with their accompanying uncertainty quantification, are popular surrogates for computer models. However, many computer models are not well represented by stationary Gaussian processes models. Deep Gaussian processes have been shown to be capable of capturing non-stationary behaviors and abrupt regime changes in the computer model response. In this paper, we explore the properties of two deep Gaussian process formulations within the context of computer model emulation. For one of these formulations, we introduce a new parameter that controls the amount of smoothness in the deep Gaussian process layers. We adapt a stochastic variational approach to inference for this model, allowing for prior specification and posterior exploration of the smoothness of the response surface. Our approach can be applied to a large class of computer models, and scales to arbitrarily large simulation designs. The proposed methodology was motivated by the need to emulate an astrophysical model of the formation of binary black hole mergers. 
\end{abstract}

\noindent
{\it Keywords:} surrogate model, stochastic variational inference, emulator
\vfill

\newpage
\spacingset{1.8} 
\section{Introduction}

Computer models, or simulators, are widely used as a way to explore complex physical systems. Frequently, a simulator is computationally expensive and only a limited number of model evaluations are available.  In other settings, the computational model is relatively fast to evaluate, but is not broadly available \citep[e.g.][]{Kauf}. In either case, an emulator of the computer models is needed to act as a surrogate.

Stationary Gaussian processes (GP) have become the conventional approach for deterministic computer model emulation \citep{Sacks,jones}. In many cases, the simulator response surface does not resemble a realization of a stationary GP. 

Recently, deep Gaussian processes (DGP) have been proposed for non-parametric regression \citep{law,Dunlop}, and 
computer model emulation \citep{ann,raja,rada,dw}. \citet{con} investigated the convergence properties DGP regression in general settings.

Large datasets are challenging for GPs and even more so for DGPs. The Vecchia approximation \citep{kat} has been successfully implemented for DGPs in current Bayesian implementations \citep[e.g.][]{min2,sau2} - scaling to hundreds of thousands of data points  but not necessarily millions as in our motivating application. In this work, we introduce a non-stationary DGP emulator which can be applied to a large class of complex computer models and scales to large simulation designs. 

Our first aim is to lay out some of the properties of DGPs for computer model emulation. We adapt the approach in \citet{Dunlop} so that it can serve as an emulator for computer models, and introduce a new parameter (or parameters) that allows us to control the smoothness of the DGP layers. We also develop a  variational inference (VI) approach for the proposed DGP which allows for prior specification and posterior exploration of the smoothness of the response surface. The proposed methodology was motivated by the emulation of an astrophysical model (Compact Object Mergers: Population Astrophysics and Statistics or COMPAS) that simulates the formation of binary black holes (BBHs) \citep{Bar}.

The paper is organized as follows: Section \ref{sec:app} introduces the challenges encountered when emulating the COMPAS model. Section \ref{sec:emu} presents an overview of GP emulation and describes existing work for non-stationary GPs. In Section \ref{sec:dgp}, the notation of two broad forms of the DGP are generalized to emphasize their differences and linking properties to stationary GPs. In Section \ref{sec:sdgp}, the DGP emulator is proposed through modifying one of the forms, followed by illustration of a new property induced by our modifications. We finish off this section with adapting a VI approach to our DGP emulator and providing a prediction method. In Sections \ref{sec:ill3} and \ref{num:com}, our proposed method is illustrated in a synthetic example as well as emulation of the COMPAS model. Section \ref{sec:sum3} concludes with a discussion. \\[-40pt]

\section{Application}
\label{sec:app} 

The application that motivated the proposed methodology was emulation of the COMPAS simulator of BBH mergers \citep{Bar}. BBHs are systems consisting of two black holes in close orbit around each other (Figure \ref{fig:fig0}). When two black holes merge, a gravitational wave signal is emitted and is measured by ground-based detectors \citep{Man}. During the merger, mass (i.e., the {\it{chirp mass}}) is expelled. The inputs to the COMPAS model describe the initial conditions of the star system (e.g., the mass of the most massive star) and parameters that govern physical processes. The output of interest is the chirp mass of the merger. Given the complexity of binary stellar evolution,  many billions of BBHs may need to be simulated to perform an experiment that is sufficiently large to make scientific inferences. Consequently, a fast statistical emulator of the COMPAS model is desired. 

\begin{figure}[h!]
\centering
\includegraphics[height=1.5in]{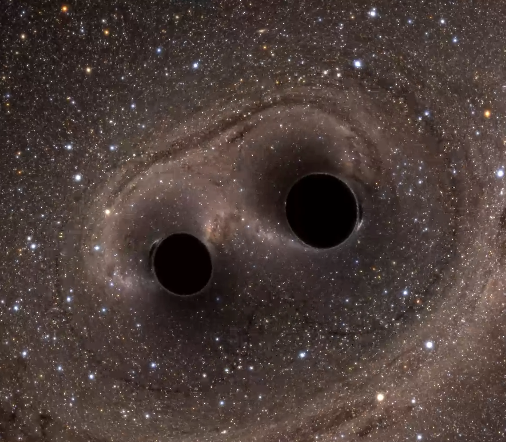}
\includegraphics[height=1.5in]{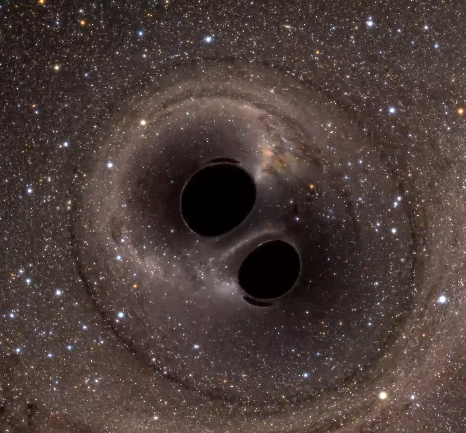}
\includegraphics[height=1.5in]{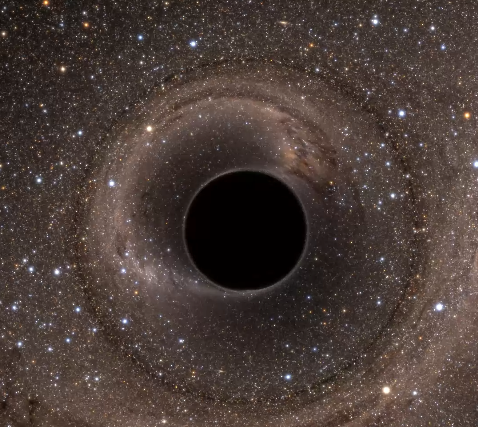}
\caption{Computer simulation of the black hole binary system GW150914. Credits: SXS (Simulating eXtreme Spacetimes) project}
\label{fig:fig0}
\end{figure}
The first challenge for emulation in this setting is the presence of regions of discontinuities in the response surface of the chirp mass. That is, BBHs only form in some regions of the input space. If a BBH system does not form, then no chirp mass is observed. The portions of input space that result in non-zero outputs are unknown. It is assumed that a chirp mass is observed in a union of compact regions within the input space. The second challenge for emulation is that the simulation suite is large ($\approx 2 \times 10^6$), and only about $24\%$ result in BBH mergers \citep[e.g.][]{broe}. If millions of BBHs are needed, a large number of simulations need to be performed. 

So, a fast emulator of the COMPAS model will have to be adapted to the type of complexity that we have in this setting and contend with a large number of simulation runs. This motivated us to propose a new methodology for building a non-stationary emulator which can be used in a wide range of applications - including the COMPAS model.\\[-40pt]

\section{Emulation}
\label{sec:emu}

The conventional approach to computer model emulation is to use a stationary GP \citep{Sacks}. Let $y=\eta(\mathbf{x})$ denote the scalar output of a deterministic computer model, $\eta(.)$, at input $\mathbf{x}\in \mathcal{X}\subseteq\mathbb{R}^d$. Let $\mathbf{X}$ be the $n_S \times d$ design matrix with outputs $\mathbf{y}=(y_1,\ldots, y_{n_S})^T$ (i.e., $y_i=\eta(\mathbf{x}_i ); \mbox{ } i=1,2,\ldots, n_S)$. The emulator is often specified as a stationary GP with constant mean, $\mu$, and   covariance \\[-25pt]
\begin{equation*}
\mathrm{Cov}(y(\mathbf{x}),y(\mathbf{x}^\prime))=k(\mathbf{x},\mathbf{x}^\prime; \bm{\phi})=\sigma^2\prod_{l=1}^{d}k_l(~|x_l-x_l^\prime|~;\lambda_l),
\label{eq:scov}
\end{equation*}\\[-35pt]
where $k:\mathcal{X}\times \mathcal{X}\rightarrow \mathbb{R}$, $k_l$ is a one-dimensional correlation function, and $\bm{\phi}=\{\sigma^2,\bm{\lambda}\}$ is the set containing  
the marginal variance, $\sigma^2$, and the correlation parameters, $\bm{\lambda}=(\lambda_1,\dots,\lambda_d)$. Conditioning on $\bm{\phi}$, $\mathbf{X}$, and $\mathbf{y}$, the predictive distribution at new input, $\mathbf{x}^*$,  is conditionally Gaussian with mean and variance, respectively, \\[-20pt]
\begin{equation}
\label{eqn:eq11}
k(\mathbf{X},\mathbf{x}^*;\bm{\phi})^T \mathbf{K}^{-1}\mathbf{y},~~~~~~k(\mathbf{X},\mathbf{x}^*;\bm{\phi})^T \mathbf{K}^{-1}k(\mathbf{X},\mathbf{x}^*;\bm{\phi}),
\end{equation}\\[-35pt]
where $\mathbf{K} = k(\mathbf{X},\mathbf{X};\bm{\phi})$ is the covariance matrix for the simulations, with $\mathbf{K}_{ij}=k(\mathbf{x}_i,\mathbf{x}_j;\bm{\phi})$, and $k(\mathbf{X},\mathbf{x}^*;\bm{\phi})$ is the $n_S \times 1$ vector of correlations between a response at $\mathbf{x}^*$ and those at the inputs in the design. 

There are situations where the simulator outputs are not well represented by a stationary GP. This can occur, when there are rapid changes in  the response surface or discontinuities. Figure \ref{fig:fig1} shows a model given by $\eta(x) = \mathbb{1}_{(0.3,0.7)}$ for $x\in(0,1)$ where a stationary GP has difficulty modeling the behaviour in the response \citep{Dunlop}. A more appropriate GP emulator of this model would locally adapt the correlation lengths to the response surface.

\begin{figure}[h!] 
  \centering
  
  \includegraphics[width=8cm]{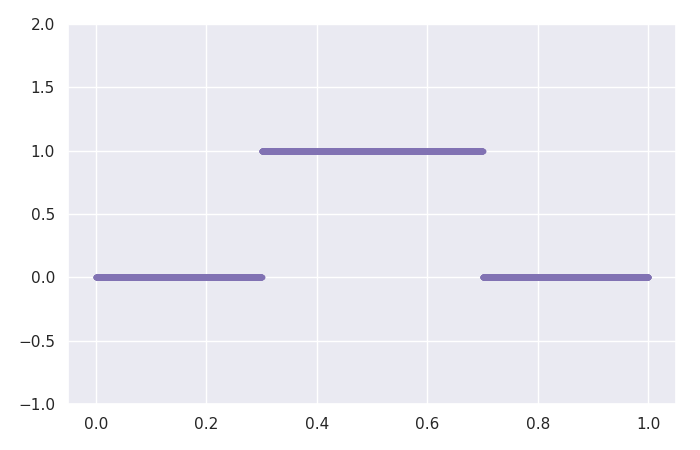}
  \caption{A simple, illustrative computer model with regions of discontinuities}
  \label{fig:fig1}
\end{figure}

Most approaches to non-stationary GP modeling can be viewed as either space-warping or covariance-modeling.  
In the former, the input space is warped so that the observations can be modeled as a stationary GP \citep[e.g.][]{Samp,Smit}. 
Alternatively, one can vary the correlation function over the input space to deal with non-stationarity \citep[e.g.][]{hig2,Paciorek}. In recent years, DGPs have been proposed to accommodate the non-stationary response surfaces, and in the next section we present two common formulations as generalizations of space-warping and covariance-modeling methods. 

\section{Deep Gaussian Processes (DGP)}
\label{sec:dgp}

\subsection{DGP Formulations}

The first DGP approach we consider \citep{law} uses compositions of GPs. Specifically, a DGP with $N$ hidden layers in this form is defined by composition of functions $f_1: \mathcal{X}\subseteq\mathbb{R}^d\rightarrow \mathbb{R}^{d^{\prime}_1}$ and $f_n: \mathbb{R}^{d^{\prime}_{n-1}}\rightarrow \mathbb{R}^{d^{\prime}_n}$ $(n=2,\dots,N+1)$, that are conditionally Gaussian,\\[-20pt]

\begin{equation*}
y(\mathbf{x})\vert f_{N}(\mathbf{x})~{\sim}~\gp(\mathbf{0}, k_{N+1}(f_{N}(\mathbf{x});\bm{\phi}_{N+1}))
\end{equation*}
\begin{equation}
f_{N,l}(\mathbf{x})\vert f_{N-1}(\mathbf{x})~{\sim}~\gp(\mathbf{0}, k_{N,l}(f_{N-1}(\mathbf{x});\bm{\phi}_{N})),~~~~~l=1,\dots,d^{\prime}_{N} ,
\label{eq:44}
\end{equation}\\[-65pt]
$$\vdots~~~~~~~~~$$\\[-70pt]
\begin{equation*}
f_{1,l}(\mathbf{x})~{\sim}~\gp(\mathbf{0}, k_{1,l}(\mathbf{x};\bm{\phi}_1)),~~~~l=1,\dots,d^{\prime}_{1},
\end{equation*}
 
where $d^{\prime}_{N+1}=1$, $\mathbf{x}\in \mathcal{X}\subseteq\mathbb{R}^d$, and $f_{n,l}(\mathbf{x})$ represents the $l^{\mathrm{th}}$ component of $f_n(\mathbf{x})\in \mathbb{R}^{d^{\prime}_n}$. Here $k_{1,l}:\mathcal{X}\times \mathcal{X}\rightarrow \mathbb{R}$ and $k_{n,l}:\mathbb{R}^{d^{\prime}_{n-1}}\times \mathbb{R}^{d^{\prime}_{n-1}}\rightarrow \mathbb{R}$ are stationary covariance functions, and $\bm{\phi}_{1}$ and $\bm{\phi}_{n}$ are covariance parameters for the GPs. Typically the covariance functions are chosen to be the same (e.g. squared exponential). Explicitly, the covariance function for subsequent layers ($n>1$) is $k_{n,l}(f_{n-1}(\mathbf{x}),f_{n-1}(\mathbf{x}');\bm{\phi}_n)$. Since the inputs to the covariance function at layer $n$ are a function of the outputs from the previous layer, this approach amounts to warping the input space. Here, $f_{N+1}$ refers to a DGP with $N$ hidden layers viewed as unobserved latent variables that warp the input space. 

An alternative form of the DGP was introduced in \citet{Dunlop}. In their specification, the covariance parameters of each layer are a function of the output of a previous layer. Specifically, a DGP with $N$ hidden layers in this form is defined by sequences of functions $f_n: \mathcal{X}\subseteq\mathbb{R}^d\rightarrow \mathbb{R}$ $(n=1,\dots,N+1)$, that are conditionally Gaussian, \\[-20pt]
\begin{equation*}
y(\mathbf{x})\vert f_{N}(\mathbf{x})~\sim~ \gp(\mathbf{0}, k_{N+1}(\mathbf{x};\bm{\phi}_{N+1}(f_{N}(\mathbf{x})))),
\label{eq:lik}
\end{equation*}
\begin{equation}
f_{N}(\mathbf{x})\vert f_{N-1}(\mathbf{x})~\sim~ \gp(\mathbf{0}, k_{N}(\mathbf{x};\bm{\phi}_{N}(f_{N-1}(\mathbf{x})))),
\label{eq:66}
\end{equation}\\[-65pt]
$$\vdots$$\\[-70pt]
\begin{equation*}
f_1(\mathbf{x})~\sim~\gp(\mathbf{0}, k_1(\mathbf{x};\bm{\phi}_1)),
\end{equation*}

where $\mathbf{x}\in \mathcal{X}\subseteq\mathbb{R}^d$. Similar to the previous approach, a stationary covariance function $k_1:\mathcal{X}\times \mathcal{X}\rightarrow \mathbb{R}$, with parameter $\bm{\phi}_1$, is used in the base layer. In this form, the covariance function $k_n:\mathcal{X}\times\mathcal{X}\rightarrow \mathbb{R}$ $(n=2,\dots,N+1)$ is a non-stationary covariance function which operates on the original input space $\mathcal{X}$, and $\bm{\phi}_{n}(f_{n-1}(\mathbf{x}))$ denotes all covariance parameters that are a function of the previous layer $f_{n-1}$. Explicitly, the covariance function for subsequent layers ($n>1$) is $k_n(\mathbf{x},\mathbf{x}';\bm{\phi}_n(f_{n-1}(\mathbf{x}),f_{n-1}(\mathbf{x}')))$ where $k_n(.,.)$ is always a function of the original input locations $\mathbf{x}$ and $\mathbf{x}'$, but the covariance parameters depend on the processes $f_{n-1}(\mathbf{x})$ and $f_{n-1}(\mathbf{x}')$ at those locations. Hence, handling non-stationarity in this form of the DGP can be considered as covariance-modelling. Under this specification, the observation vector
$\mathbf{y}=\mathbf{f}_{N+1}=f_{N+1}(\mathbf{X})\in \mathbb{R}^{n_S}$. Here $f_{N+1}$ refers to a DGP with $N$ hidden layers $(\mathbf{f}_1,\mathbf{f}_2,\dots, \mathbf{f}_{N})$, considered as unobserved random vectors which are used to discover the covariance among the given simulation outputs. For simplicity, we use $k_n(\mathbf{x},\mathbf{x}';f_{n-1}(\mathbf{x}),f_{n-1}(\mathbf{x}'))$ through the rest of the paper. 

\begin{remark}
Looking at (\ref{eq:44}) and (\ref{eq:66}), the two DGP formulations appear similar, but there are important differences. In equation (\ref{eq:66}) the input of each layer is always $\mathbf{x}$ in a $d$-dimensional space and is mapped to a $1$-dimensional output. The outputs of each hidden layer are considered as parameters and are used to model covariance parameters in the next layer. In equation (\ref{eq:44}) the input of layer $f_n$ ($n>1$) is the output of the previous layer $f_{n-1}$ in  $d'_{n-1}$-dimensional space and is mapped to a $d'_n$-dimensional output where $d'_n\geq d$ and $d'_n$ is not necessarily one (except $d'_{N+1}=1$). In this form, the outputs in each layer are considered as an warped domain space for the covariance function of the next layer.   
\end{remark}

\subsection{Stationarity as a special case}

One might ask what the DGP would look like if the data were actually a realization of a stationary GP. Proofs of the following two propositions and their corollary are given in the supplementary materials. 

\begin{proposition}

Under the DGP model in (\ref{eq:44}), $f_{n,l}(\mathbf{x})\vert f_{n-1}(\mathbf{x})$ $(l=1,\dots,d^{\prime}_{n})$ is a stationary GP if and only if  $f_{n-1}(\mathbf{x})=\mathbf{c}\odot\mathbf{x}$ for any $\mathbf{x}\in\mathcal{X}\subseteq\mathbb{R}^d$, where $\mathbf{c}\in\mathbb{R}^d$ is a vector of constants and $\odot$ represents an element-wise product. 
\label{eq:lfix}
\end{proposition}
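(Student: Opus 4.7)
The plan is to establish the biconditional by handling each direction separately: the sufficiency is a direct substitution, while necessity reduces to a functional-equation argument that ultimately invokes Cauchy.

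For the ``if'' direction, I would start from $f_{n-1}(\mathbf{x}) = \mathbf{c}\odot\mathbf{x}$ and substitute into the covariance $k_{n,l}(f_{n-1}(\mathbf{x}),f_{n-1}(\mathbf{x}');\bm{\phi}_n)$. Since $k_{n,l}$ is itself stationary in its two arguments, its value is governed entirely by the difference $f_{n-1}(\mathbf{x})-f_{n-1}(\mathbf{x}')=\mathbf{c}\odot(\mathbf{x}-\mathbf{x}')$. Expanding through the product-form stationary kernel used throughout the paper, each factor becomes $k_j(|c_j|\cdot|x_j-x_j'|;\lambda_j)$, which can be rewritten as a rescaled one-dimensional stationary kernel in $|x_j-x_j'|$ with lengthscale $\lambda_j/|c_j|$. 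The resulting product is a translation-invariant product-form covariance in $\mathbf{x}$, so the conditional process is a stationary GP on $\mathcal{X}$.

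For the ``only if'' direction, I would assume the conditional covariance depends only on $\mathbf{x}-\mathbf{x}'$ and unpack the product form $\sigma^2\prod_j k_j(|f_{n-1,j}(\mathbf{x})-f_{n-1,j}(\mathbf{x}')|;\lambda_j)$. Using strict monotonicity/positivity of the one-dimensional correlation functions $k_j$ together with distinct, non-degenerate lengthscales, I would argue that each factor is individually forced to be a function of $\mathbf{x}-\mathbf{x}'$ alone, and moreover that each $|f_{n-1,j}(\mathbf{x})-f_{n-1,j}(\mathbf{x}')|$ depends on a single coordinate of $\mathbf{x}-\mathbf{x}'$ (otherwise the product would fail to be of the product-form stationary type). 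Fixing $\mathbf{x}'=\mathbf{0}$ converts this into a Cauchy-type functional equation $\psi_j(\mathbf{x}+\mathbf{x}')=\psi_j(\mathbf{x})+\psi_j(\mathbf{x}')$ for the centered coordinate functions; continuity of GP sample paths then promotes additivity to linearity. Collecting the resulting one-variable-per-coordinate structure across $j$ (and relabeling if necessary) gives $f_{n-1,j}(\mathbf{x})=c_j x_j$, i.e.\ $f_{n-1}(\mathbf{x})=\mathbf{c}\odot\mathbf{x}$.

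The main obstacle I anticipate is the identifiability step in the necessity direction: ruling out linear combinations across coordinates and thereby upgrading ``affine in $\mathbf{x}$'' to the strict element-wise product form. This will hinge on using the product structure of $k_{n,l}$ together with the non-degeneracy of the lengthscales $\lambda_j$ to prevent two different coordinate directions of $\mathbf{x}$ from being entangled inside a single factor. If one were to allow only isotropy (rather than the product form with distinct lengthscales) the conclusion would weaken to a general linear map; it is the anisotropic product structure assumed in~(\ref{eq:scov}) that pins the solution down to $\mathbf{c}\odot\mathbf{x}$.
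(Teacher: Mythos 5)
Your ``if'' direction is correct and is essentially the paper's own argument: substituting $f_{n-1}(\mathbf{x})=\mathbf{c}\odot\mathbf{x}$ makes the warped kernel depend only on $\mathbf{c}\odot(\mathbf{x}-\mathbf{x}')$, i.e.\ it becomes a stationary kernel in $\mathbf{x}$ with rescaled lengthscales. For the ``only if'' direction the paper offers only a one-line assertion (stationarity of $k_{n,l}$ in the warped inputs ``hence'' forces $f_{n-1}(\mathbf{x})=\mathbf{c}\odot\mathbf{x}$), and your instinct to make this rigorous via a Cauchy-type functional equation is reasonable: if $|f_{n-1,j}(\mathbf{x})-f_{n-1,j}(\mathbf{x}')|$ depends only on $\mathbf{x}-\mathbf{x}'$, then for continuous $f_{n-1,j}$ each component is affine.

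The step you yourself flag as the main obstacle, however, is exactly where the argument breaks. Stationarity of the conditional layer means only that $k_{n,l}(f_{n-1}(\mathbf{x}),f_{n-1}(\mathbf{x}'))$ is invariant under translation of $\mathbf{x}$; it does not require the composite covariance to retain a separable product form in the \emph{original} coordinates. Consequently any affine map $f_{n-1}(\mathbf{x})=A\mathbf{x}+\mathbf{b}$ with non-diagonal $A$ still gives $f_{n-1}(\mathbf{x})-f_{n-1}(\mathbf{x}')=A(\mathbf{x}-\mathbf{x}')$ and hence a translation-invariant conditional covariance, so the appeal to ``the anisotropic product structure with distinct lengthscales'' cannot rule such maps out: that product structure lives in the warped coordinates and is perfectly compatible with $A$ mixing original coordinates, in which case the resulting stationary kernel simply fails to be separable in $\mathbf{x}$ --- which stationarity does not forbid. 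The constant offset survives for the same reason: you center the coordinate functions to run the Cauchy argument but then drop the additive constant when concluding $f_{n-1,j}(\mathbf{x})=c_jx_j$. What your argument actually delivers is ``$f_{n-1}$ affine,'' not ``$f_{n-1}=\mathbf{c}\odot\mathbf{x}$.'' The paper's own proof makes the same leap silently, but since you explicitly proposed a mechanism to close this identifiability gap, you should note that the mechanism does not work as stated; closing it would need an extra hypothesis (e.g.\ that the next layer's stationary covariance is again separable in the original coordinates, and even then the form is only pinned down up to permutation and sign of coordinates and an additive constant).
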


\begin{proposition}
Under the DGP model in (\ref{eq:66}), $f_n(\mathbf{x})\vert f_{n-1}(\mathbf{x})$ is a stationary GP if and only if $f_{n-1}(\mathbf{x})$ is a constant for any $\mathbf{x}\in\mathcal{X}\subseteq\mathbb{R}^d$.
\label{eq:dfix}
\end{proposition}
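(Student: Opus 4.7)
The plan is to treat the two directions of the iff separately and to pin down, in each case, exactly how the location dependence in $k_n(\mathbf{x},\mathbf{x}';\bm{\phi}_n(f_{n-1}(\mathbf{x}),f_{n-1}(\mathbf{x}')))$ either disappears or forces $f_{n-1}$ to be constant.

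For the sufficiency direction, I would suppose $f_{n-1}(\mathbf{x})=c$ for all $\mathbf{x}\in\mathcal{X}$. Then the argument $(f_{n-1}(\mathbf{x}),f_{n-1}(\mathbf{x}'))=(c,c)$ does not depend on $(\mathbf{x},\mathbf{x}')$, so $\bm{\phi}_n(f_{n-1}(\mathbf{x}),f_{n-1}(\mathbf{x}'))$ collapses to a fixed parameter vector $\bm{\phi}_n^{(c)}$. By construction $k_n(\cdot,\cdot;\bm{\phi})$ is a stationary covariance function on $\mathcal{X}\times\mathcal{X}$ for every fixed $\bm{\phi}$, so $k_n(\mathbf{x},\mathbf{x}';\bm{\phi}_n^{(c)})$ depends on $\mathbf{x},\mathbf{x}'$ only through $\mathbf{x}-\mathbf{x}'$, and the conditional GP $f_n(\mathbf{x})\mid f_{n-1}$ is stationary.

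For the necessity direction, I would start from the hypothesis that $f_n(\mathbf{x})\mid f_{n-1}$ is stationary, so there exists a function $\tilde{k}_n$ with
\begin{equation*}
k_n\bigl(\mathbf{x},\mathbf{x}';\bm{\phi}_n(f_{n-1}(\mathbf{x}),f_{n-1}(\mathbf{x}'))\bigr)=\tilde{k}_n(\mathbf{x}-\mathbf{x}')\quad\text{for all }\mathbf{x},\mathbf{x}'\in\mathcal{X}.
\end{equation*}
Setting $\mathbf{x}'=\mathbf{x}$ shows the conditional variance $k_n(\mathbf{x},\mathbf{x};\bm{\phi}_n(f_{n-1}(\mathbf{x}),f_{n-1}(\mathbf{x})))$ equals $\tilde{k}_n(\mathbf{0})$ for every $\mathbf{x}$. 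Because the stationary base kernel $k_n(\cdot,\cdot;\bm{\phi})$ is assumed to depend non-trivially on $\bm{\phi}$ (this is the modelling assumption that makes the deep construction in (\ref{eq:66}) meaningful, otherwise the hidden layer is spurious), $\bm{\phi}_n\mapsto k_n(\mathbf{x},\mathbf{x};\bm{\phi}_n)$ is injective; hence $\bm{\phi}_n(f_{n-1}(\mathbf{x}),f_{n-1}(\mathbf{x}))$ takes the same value for every $\mathbf{x}$. Combined with the analogous constraint at general lags from $k_n(\mathbf{x},\mathbf{x}';\bm{\phi}_n(\cdot,\cdot))=\tilde{k}_n(\mathbf{x}-\mathbf{x}')$, one concludes that $f_{n-1}(\mathbf{x})=f_{n-1}(\mathbf{x}')$ for all $\mathbf{x},\mathbf{x}'\in\mathcal{X}$, i.e., $f_{n-1}$ is constant on $\mathcal{X}$.

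The main obstacle is the second step: the argument requires a genuine dependence of $k_n$ on $\bm{\phi}_n$, so one has to state (or take from the DGP specification) that the mapping $\bm{\phi}_n\mapsto k_n(\cdot,\cdot;\bm{\phi}_n)$ is one-to-one, at least when restricted to the image of $\bm{\phi}_n\circ f_{n-1}$. Without this, one could have pathological parameterisations in which varying $f_{n-1}$ leaves the kernel unchanged, and the "only if" direction would fail. I would therefore make this injectivity assumption explicit early in the proof and then the rest reduces to the variance-matching and lag-matching argument sketched above.
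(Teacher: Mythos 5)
Your proposal is correct and follows essentially the same route as the paper's own proof: both directions reduce to the observation that the conditional process is stationary precisely when the covariance parameters $\bm{\phi}_n(f_{n-1}(\mathbf{x}),f_{n-1}(\mathbf{x}'))$ are location-independent, which holds precisely when $f_{n-1}$ is constant. The only difference is that you state explicitly the non-degeneracy (injectivity) assumptions on $\bm{\phi}_n\mapsto k_n(\cdot,\cdot;\bm{\phi}_n)$ and on $f\mapsto\bm{\phi}_n(f,f')$ that the paper's argument uses implicitly, which is a reasonable refinement rather than a different approach.
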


\begin{corollary} 
Under the DGP model in (\ref{eq:66}), $y(.)$ is a realization of a stationary GP if and only if $y(.)$ is a DGP with one hidden layer where $f_1(\mathbf{x})$ is a constant for any $\mathbf{x}\in\mathcal{X}\subseteq\mathbb{R}^d$.\\[-20pt]
\label{eq:cor}
\end{corollary}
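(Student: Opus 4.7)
The plan is to read the corollary as an immediate bookkeeping consequence of Proposition~\ref{eq:dfix}. The reverse direction ($\Leftarrow$) is the easy half: if $y(\mathbf{x})=f_2(\mathbf{x})$ is a one-hidden-layer DGP of the form (\ref{eq:66}) with $f_1(\mathbf{x})\equiv c$ constant on $\mathcal{X}$, then Proposition~\ref{eq:dfix} applied with $n=2$ yields that $f_2(\mathbf{x})\vert f_1(\mathbf{x})$ is a stationary GP. Because $f_1$ is non-random (a deterministic constant), the marginal law of $y$ coincides with this conditional law, so $y$ is itself a stationary GP.

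For the forward direction ($\Rightarrow$), I would start from a DGP of the form (\ref{eq:66}) with $N$ hidden layers whose marginal $y=f_{N+1}$ is stationary, and compute its covariance via the law of total covariance. Since each layer has conditional mean zero, the cross-term vanishes and
\begin{equation*}
\mathrm{Cov}\bigl(y(\mathbf{x}),y(\mathbf{x}')\bigr)=\EE_{f_N}\bigl[k_{N+1}\bigl(\mathbf{x},\mathbf{x}';f_N(\mathbf{x}),f_N(\mathbf{x}')\bigr)\bigr].
\end{equation*}
Stationarity of $y$ says the left-hand side depends only on $\mathbf{x}-\mathbf{x}'$, whereas the map $(\mathbf{x},\mathbf{x}')\mapsto k_{N+1}(\mathbf{x},\mathbf{x}';f_N(\mathbf{x}),f_N(\mathbf{x}'))$ is, by construction, non-stationary at every realization for which $f_N$ is non-constant. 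This forces $f_N$ to be constant on $\mathcal{X}$ almost surely, which then lets me invoke Proposition~\ref{eq:dfix} to collapse the hierarchy: a constant top hidden layer can be absorbed into the parameter set of $k_{N+1}$, reducing to a DGP with $N-1$ hidden layers whose top layer is now the observation layer of a stationary GP. Iterating the same argument all the way down leaves a one-hidden-layer DGP with $f_1$ constant.

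The step I expect to be the main obstacle is making rigorous the passage from ``the expectation over $f_N$ produces a stationary covariance'' to ``$f_N$ is constant a.s.'' This requires a mild identifiability hypothesis on the parametric family $k_{N+1}(\mathbf{x},\mathbf{x}';\cdot,\cdot)$: namely that the dependence on the parameters is non-degenerate enough that no non-constant random realization of $f_N$ can, after averaging, reproduce a translation-invariant covariance. This is essentially the same regularity that makes Proposition~\ref{eq:dfix} a genuine ``if and only if.'' Once that point is granted, the recursive collapse of the layers is routine, and both directions of the corollary are handled by Proposition~\ref{eq:dfix} applied at the appropriate $n$.
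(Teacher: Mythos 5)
Your reverse direction is fine and is essentially the paper's: apply Proposition~\ref{eq:dfix} at the single hidden layer and note that, because $f_1$ is deterministic, the marginal law of $y$ equals the conditional law. The forward direction, however, both departs from the paper and contains a genuine gap. The paper reads ``$\Rightarrow$'' as a representation statement and proves it by pure construction: given $y\sim\gp(\mathbf{0},k(\cdot;\bm{\phi}))$ stationary, simply \emph{define} $f_1(\mathbf{x})\equiv\bm{\phi}$, so that $y\vert f_1$ has exactly the form (\ref{eq:66}) with one constant hidden layer. No law-of-total-covariance computation and no identifiability argument is needed.

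The step in your argument that fails is the passage from ``$\mathrm{Cov}(y(\mathbf{x}),y(\mathbf{x}'))=\EE_{f_N}[k_{N+1}(\mathbf{x},\mathbf{x}';f_N(\mathbf{x}),f_N(\mathbf{x}'))]$ is translation-invariant'' to ``$f_N$ is constant a.s.'' In this hierarchy the base layer $f_1$ is a stationary GP, and each $k_n$ in (\ref{eq:paci2}) depends on $(\mathbf{x},\mathbf{x}')$ only through $\|\mathbf{x}-\mathbf{x}'\|_2$ and through the values $(f_{n-1}(\mathbf{x}),f_{n-1}(\mathbf{x}'))$; since the joint law of $(f_{n-1}(\mathbf{x}),f_{n-1}(\mathbf{x}'))$ itself depends only on $\mathbf{x}-\mathbf{x}'$, the averaged covariance of every layer --- and hence of $y$ --- is automatically translation-invariant, whether or not $f_N$ is constant. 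So no ``mild identifiability hypothesis'' rescues the inference: the obstruction to $y$ being a stationary GP when $f_N$ is non-constant is the failure of marginal \emph{Gaussianity} (the marginal is a Gaussian mixture), not non-stationarity of the covariance, and your argument addresses the wrong obstruction. Moreover, the uniqueness claim you are aiming at (any $N$-layer DGP with stationary GP marginal must collapse to one constant hidden layer) is strictly stronger than what the corollary asserts and than what the paper proves; the intended content is only that a stationary GP \emph{admits} the one-constant-hidden-layer representation, which is immediate.
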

\begin{remark}\label{num:rem1}
Corollary \ref{eq:cor} has practical implications that are helpful when one wants to do preliminary exploration on a dataset. Specifically, to diagnose if a stationary GP is a good enough solution for the the given data, one can fit a one hidden layer DGP to the data.  If the first hidden layer is (almost) constant, a staionary GP will be an adequate solution (See chapter 3 of \citet{yazdi}).

\end{remark}

One advantage of the DGP formulation in (\ref{eq:66}) is that the covariance functions operate on the original input space rather than a warped input space. This allows us to more easily understand how the correlation changes through the original input space. Specifically, this exploration can be done by visualizing the layers versus the original inputs. We now describe a useful innovation
to this formulation and propose a VI approach for this model.\\[-40pt]

\section{DGP Emulation}\label{sec:sdgp}

In this section, new DGP methodology is proposed for computer model emulation. The proposed approach has two innovations $(i)$ modifying the DGP formulation in \citet{Dunlop} by introducing a new parameter that controls the smoothness of the DGP layers; $(ii)$ adapting a stochastic VI approach to the proposed DGP (this includes deriving a new evidence lower bound and sampling procedure for the proposd emulator). These innovations work together allow the DGP to capture a wide range of response surfaces that are not well-modelled by a stationary GP, and scales to large sample sizes. \\[-40pt]

\subsection{Proposed Model}

\subsubsection{Non-stationary Covariance Functions}
 
Let $\rho(.)$ be a stationary correlation function. The Non-stationary covariance function is introduced by \citet{Paciorek} using $\rho(.)$ as follows: \\[-20pt]
\begin{equation}
k(\mathbf{x},\mathbf{x}';\bm{\phi})=\sigma^2\frac{|\Sigma(\mathbf{x})|^{1/4}|\Sigma(\mathbf{x}^\prime)|^{1/4}}{|(\Sigma(\mathbf{x})+\Sigma(\mathbf{x}^\prime))/2|^{1/2}}~\rho(\sqrt{Q(\mathbf{x},\mathbf{x}^\prime)}),
\label{eq:paci}
\end{equation} 
where $\bm{\phi}=\{\sigma^2\}$ is the set containing the variance parameter, $\Sigma:\mathbb{R}^d\rightarrow \mathbb{R}^{d\times d}$ is a $d\times d$ matrix function of the inputs, and  \\[-10pt]
\begin{equation*}
Q(\mathbf{x},\mathbf{x}^\prime)=(\mathbf{x}-\mathbf{x}^\prime)^T\Bigg(\frac{\Sigma(\mathbf{x})+\Sigma(\mathbf{x}^\prime)}{2}\Bigg)^{-1}(\mathbf{x}-\mathbf{x}^\prime), \qquad \mathbf{x},\mathbf{x}^\prime\in\mathcal{X}\subseteq\mathbb{R}^d.
\label{eq:quad}
\end{equation*}
The quadratic form, $Q(.,.)$, allows the correlation between observations at the same Euclidean distance to vary across the input space.

\citet{Dunlop} use (\ref{eq:paci}) for each DGP hidden layer and specify the matrix $\Sigma(\mathbf{x})$ as \\[-30pt]
\begin{equation}
\Sigma(\mathbf{x})=H({f_{n-1}}(\mathbf{x}))\mathbf{I}_d,\\[-10pt]
\label{eq:sig}
\end{equation}
where $H:\mathbb{R}\rightarrow \mathbb{R}\geq0$ is a non-negative function, called the length scale function, and $\mathbf{I}_d$ is a $d \times d$ identity matrix. Note that $\Sigma(\mathbf{x})$ is a different matrix in each layer. Using (\ref{eq:paci}) and (\ref{eq:sig}), the non-stationary covariance functions $k_n(.,.)$ in equation (\ref{eq:66}) for $n>1$ can be written as \\[-15pt] 
\begin{equation}
k_n(\mathbf{x},\mathbf{x}';f_{n-1}(\mathbf{x}), f_{n-1}(\mathbf{x}'))=\sigma_n^2\frac{2^{d/2}[H(f_{n-1}(\mathbf{x}))]^{d/4}[H(f_{n-1}(\mathbf{x}'))]^{d/4}}{\big[H({f_{n-1}}(\mathbf{x}))+H({f_{n-1}}(\mathbf{x}'))\big]^{d/2}}\rho(\sqrt{Q(\mathbf{x},\mathbf{x}^\prime)}),\\[-10pt]
\label{eq:paci2}
\end{equation}
where 

\begin{equation}
\sqrt{Q(\mathbf{x},\mathbf{x}^\prime)}=\frac{\parallel \mathbf{x}-\mathbf{x}'\parallel_2}{\sqrt{H({f_{n-1}}(\mathbf{x}))+H({f_{n-1}}(\mathbf{x}'))/2}}.
\label{eq:rootsq}
\end{equation}
Note that, through the length scale function, $H(.)$, each layer of the DGP is used to evaluate the length-scale values at input locations for the subsequent layer. In this work, we select $\rho(.)$ to be a Matern correlation function with smoothness parameter $\nu>0$.

\subsubsection{Controlling the smoothness of the DGP layers }\label{sec:csdgp}

Though not discussed in \citet{Dunlop}, the choice of the length scale function, $H(.)$, in (\ref{eq:sig}) impacts the smoothness of the DGP layers. In this work, we propose \\[-30pt]
\begin{equation*}
H(f)=\mathrm{exp}(\alpha f), \\[-10pt]
\end{equation*}
where $\alpha$ is a parameter that controls the level of smoothness in the DGP layers. 

\begin{proposition}\label{pro:alpha}

With the kernel specified in (\ref{eq:paci2}) for the DGP in (\ref{eq:66}), let the length scale function, H(.), be $H(f_{n-1}(\mathbf{x}))=\mathrm{exp}(\alpha f_{n-1}(\mathbf{x}))$ for $n>1$ and $\alpha\geq0$. Then at any inputs $\mathbf{x},\mathbf{x}'\in\mathcal{X}\subseteq\mathbb{R}^d$ \\[-30pt]

$(a)$ $k_n(\mathbf{x},\mathbf{x}';f_{n-1}(\mathbf{x}),f_{n-1}(\mathbf{x}'))=\sigma_n^2~\rho(\parallel\mathbf{x}-\mathbf{x}^\prime\parallel_2)$, when $\alpha=0$,

$(b)$ $k_n(\mathbf{x},\mathbf{x}';f_{n-1}(\mathbf{x}),f_{n-1}(\mathbf{x}'))\rightarrow 0$, as $\alpha\rightarrow\infty$. \\[-35pt]

\end{proposition}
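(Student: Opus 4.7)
Proof plan.

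My plan is to handle the two parts separately by direct substitution and asymptotic analysis of the prefactor in (\ref{eq:paci2}).

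For part $(a)$, the approach is a routine substitution. Setting $\alpha=0$ gives $H(f_{n-1}(\mathbf{x}))=\exp(0)=1$ for every $\mathbf{x}\in\mathcal{X}$. Plugging this into the prefactor of (\ref{eq:paci2}) produces $2^{d/2}\cdot 1^{d/4}\cdot 1^{d/4}/(1+1)^{d/2}=2^{d/2}/2^{d/2}=1$. Substituting into (\ref{eq:rootsq}) gives $\sqrt{Q(\mathbf{x},\mathbf{x}')}=\|\mathbf{x}-\mathbf{x}'\|_2/\sqrt{(1+1)/2}=\|\mathbf{x}-\mathbf{x}'\|_2$. Collecting these in (\ref{eq:paci2}) yields $\sigma_n^2\,\rho(\|\mathbf{x}-\mathbf{x}'\|_2)$, which is exactly the claim.

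For part $(b)$, the key is to track the leading exponential in the denominator of the prefactor. Write $a=f_{n-1}(\mathbf{x})$ and $b=f_{n-1}(\mathbf{x}')$. For $\mathbf{x}\neq\mathbf{x}'$, the relevant case (almost sure under a continuous-sample-path prior for the previous layer) is $a\neq b$; I would say ``without loss of generality assume $a>b$'' and factor $e^{\alpha a}$ out of the denominator:
\begin{equation*}
\bigl(e^{\alpha a}+e^{\alpha b}\bigr)^{d/2}=e^{\alpha a d/2}\bigl(1+e^{-\alpha(a-b)}\bigr)^{d/2}.
\end{equation*}
The prefactor in (\ref{eq:paci2}) then becomes
\begin{equation*}
\frac{2^{d/2}\,e^{\alpha(a+b)d/4}}{e^{\alpha a d/2}\bigl(1+e^{-\alpha(a-b)}\bigr)^{d/2}}=\frac{2^{d/2}\,e^{-\alpha(a-b)d/4}}{\bigl(1+e^{-\alpha(a-b)}\bigr)^{d/2}}.
\end{equation*}
Because $a-b>0$, the numerator tends to $0$ and the denominator to $1$ as $\alpha\to\infty$, so the prefactor vanishes. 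Since $\rho(\cdot)$ is a stationary correlation function bounded in absolute value by $1$, the whole kernel is bounded by $\sigma_n^2$ times this prefactor, and hence tends to $0$.

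The only delicate point, and what I expect to be the main obstacle to state cleanly, is the degenerate case $\mathbf{x}\neq\mathbf{x}'$ with $f_{n-1}(\mathbf{x})=f_{n-1}(\mathbf{x}')$: there the prefactor stays equal to $1$ and only the argument of $\rho$ inside (\ref{eq:rootsq}) carries the dependence on $\alpha$, shrinking like $e^{-\alpha c/2}$ where $c$ is the common value. This would actually push $\rho(\sqrt{Q})\to 1$ (the wrong direction) unless $c\leq 0$, so I would qualify the statement by restricting attention to pairs $(\mathbf{x},\mathbf{x}')$ with $f_{n-1}(\mathbf{x})\neq f_{n-1}(\mathbf{x}')$ (equivalently, almost every pair for a continuous realization of the preceding layer), and cite Proposition \ref{eq:dfix} to note that the only excluded situation is the stationary, constant-layer degeneracy already characterised earlier.
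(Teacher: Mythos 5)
Your part (a) is exactly the paper's argument: substitute $H\equiv 1$, note the prefactor collapses to $1$ and $\sqrt{Q}$ reduces to the Euclidean distance. For part (b) you take a genuinely different, and in fact tighter, route. The paper splits into three cases according to the signs of $a=f_{n-1}(\mathbf{x})$ and $b=f_{n-1}(\mathbf{x}')$: for $a,b>0$ it argues the prefactor vanishes because ``the denominator goes to infinity faster than the numerator,'' for $a>0>b$ it says the same, and for $a,b<0$ it sends the quadratic form to infinity so that $\rho\to 0$. Your single computation --- factoring $e^{\alpha a}$ out of the denominator to obtain $2^{d/2}e^{-\alpha(a-b)d/4}\bigl(1+e^{-\alpha(a-b)}\bigr)^{-d/2}$ and bounding $|\rho|\le 1$ --- handles every case with $a\neq b$ at once and makes transparent that the decisive quantity is the gap $a-b$, not the signs. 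It also exposes a real defect in the paper's case (i): when $a=b=c>0$ the numerator and denominator of the paper's ratio grow at exactly the same rate, the prefactor equals $1$ for every $\alpha$, the quadratic form tends to $0$, and the kernel tends to $\sigma_n^2\rho(0)=\sigma_n^2$, not to $0$. So the proposition as stated (``at any inputs'') fails on the set where $f_{n-1}(\mathbf{x})=f_{n-1}(\mathbf{x}')\ge 0$ (and trivially at $\mathbf{x}=\mathbf{x}'$), and your proposed restriction to pairs with $f_{n-1}(\mathbf{x})\neq f_{n-1}(\mathbf{x}')$ --- or an ``almost every pair'' qualification for a non-constant continuous realization of the previous layer --- is precisely the hypothesis needed to make (b) correct. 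The paper's proof does not flag this degeneracy; yours does, and is the more reliable argument.
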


\begin{proof}
See the supplementary materials for details.
\end{proof}

The implication of $(a)$ is that $\alpha=0$ results in a stationary covariance function, and the implication of $(b)$ is that as $\alpha$ increases, the covariance between nearby observations decreases and the resulting functions will be less smooth. 

To explore the effect of $\alpha$ and $\lambda$  on $\rho(.)$, we simulate $500$ realisations of a 7 layer DGP for each combination of  $\alpha \in \{0.1,1,2,3\}$ and $\lambda \in \{0.1, 0.5, 1,2\}$. We then sum the absolute values of their 2nd derivatives (over the 200 design points) as a score for measuring the level of the smoothness and average these across the simulations. The results are given in Table \ref{table:sum} and indicate that $\lambda$ has a negligible impact on the smoothness relative to $\alpha$. We have found this to be true over many simulations and therefore develop inference for $\alpha$ for this formulation of the DGP. 

\begin{table}[h!]
		\caption{Average of sums of $\mid  d^{2}f_6/d{x^2}\mid$}
		\centering
\begin{tabular}{c c c c c}
			\hline\hline
			$\lambda$ & $\alpha=0.1$ &  $\alpha=1$ & $\alpha=2$ & $\alpha=3$ \\ [0.2ex]
			\hline
			$0.1$ & $2,327.4$ &  $3,183.7$ & $18,616.8$ & $226,127.6$\\
			$0.5$& $2,342.7$ &  $3,100.3$ & $16,246.2$ & $217,751.7$\\
			$1$& $2,323.2$ &  $2,975.7$ & $17,352.1$ & $209,352.8$ \\
			$2$& $2,332.6$ &  $3,167.3$& $16,189.5$ & $220,885.9$ \\[0.2ex]
			\hline \\[-10pt]
		\end{tabular}
\label{table:sum}
	\end{table} 

\subsection{Inference}\label{sec:inf}

\citet{har} showed that for moderate to large dimensional input spaces ($d>10$), the required number of runs for adequate prediction accuracy grows exponentially for stationary GPs. This problem can only be compounded for estimating DGPs when the response is non-stationary. Thus, one should expect very lage sample sizes for most DGP settings with moderate to large input dimensions. We present an extension to the doubly stochastic variational inference method (DSVI) \citet{Sal} to allow inference the DGP model with large datasets such as the one outlined in Section \ref{sec:app}. 

\subsubsection{Extension to DSVI}

For the proposed DGP, the key inference innovation is to derive an evidence lower bound (ELBO). The ELBO as an objective function is used to estimate all parameters needed for approximating the target posterior distribution. Our key modifications through deriving the ELBO are (1) choosing a GP prior distribution with a zero mean function in each layer, (2) adding vector parameters which are used to evaluate the length scale values at inducing locations in each layer, and (3) including inference on the smoothness parameter, $\alpha$. 

Let $\mathbf{X}=\big[\mathbf{x}_1,\dots,\mathbf{x}_{n_S}\big]^T$ be the $n_S \times d$ design matrix and $\mathbf{y}=(y_1,\ldots, y_{n_S})^T$ be the corresponding outputs of the simulator. We construct our DGP emulator as described in the previous section, with $N$ hidden layers, $\mathbf{f}_1,\mathbf{f}_2,\dots, \mathbf{f}_{N}$, for the computer model $\eta(.)$, where $\mathbf{f}_{n}=f_{n}(\mathbf{X})\in \mathbb{R}^{n_S}$ $(n=1,\dots,N)$ are unobserved random vectors.

Since $n_S$ is assumed to be large, inference on the GPs in the various layers will be computationally challenging. To address this, we propose to use DSVI \citep{Sal}. To do so, in each layer we define an additional set of $m$ inducing locations where $m\ll n_S$ (i.e. $\mathbf{Z}_1,\mathbf{Z}_2,\dots,\mathbf{Z}_N$ such that $\mathbf{Z}_n\in\mathbb{R}^{m\times d} $ for $n=1,\dots,N)$. Define inducing variables $(\mathbf{u}_1,\mathbf{u}_2,\dots,\mathbf{u}_N)$, to be the value of the hidden layers at the corresponding inducing locations, ($\mathbf{Z}_1,\mathbf{Z}_2,\dots,\mathbf{Z}_N$). In the DSVI, there is no dependency between the $\mathbf{u}_n$ and $\mathbf{u}_{n-1}$. This simplification allows for a fast inference \citep{Sal}. Following the DGP formulation in (\ref{eq:66}), in each layer, $f_n(\cdot)$ is a GP with zero prior mean.

To ease notation, hereafter $\PP(.)$ and $\Nagr(.,.)$ denote probability and normal distributions, respectively. The joint distribution of the first layer, and the corresponding inducing variables at the design points, and the inducing locations can be factorized as \\[-20pt]
\begin{equation}
\PP(\mathbf{f}_1,\mathbf{u}_1;\mathbf{X},\mathbf{Z}_1)=\PP(\mathbf{f}_1\vert\mathbf{u}_1;\mathbf{X},\mathbf{Z}_1)\PP(\mathbf{u}_1;\mathbf{Z}_1),
\label{eq:jp1}
\end{equation}
where $\PP(\mathbf{f}_1\vert\mathbf{u}_1;\mathbf{X},\mathbf{Z}_1)=\Nagr(\mathbf{f}_1\vert \boldsymbol{\mu}_1,\mathbf{\Sigma}_1)$ and $\PP(\mathbf{u}_1;\mathbf{Z}_1)=\Nagr(\mathbf{u}_1\vert \mathbf{0},k_1(\mathbf{Z}_1,\mathbf{Z}_1;\bm{\phi}_1))$, $k_1(.,.)$ is a stationary covariance function and, for $i,j=1,\dots,n_S$, \\[-20pt]
$$[{\boldsymbol{\mu}_1}]_i=\boldsymbol{\Gamma}_1(\mathbf{x}_i)^T\mathbf{u}_1,$$
$$[\mathbf{\Sigma}_1]_{ij}=k_1(\mathbf{x}_i,\mathbf{x}_j;\bm{\phi}_1)-\boldsymbol{\Gamma}_1(\mathbf{x}_i)^T k_1(\mathbf{Z}_1,\mathbf{Z}_1;\bm{\phi}_1)\boldsymbol{\Gamma}_1(\mathbf{x}_j),$$\\[-30pt]
and $\boldsymbol{\Gamma}_1(\mathbf{x}_i)=k_1(\mathbf{Z}_1,\mathbf{Z}_1;\bm{\phi}_1)^{-1}k_1(\mathbf{Z}_1,\mathbf{x}_i;\bm{\phi}_1)$. In subsequent layers we specify the non-stationary covariance functions, $k_n(.,.)$, from (\ref{eq:paci2}), for each $f_n(.)$. As before, the length scale function is $H(f)=\mathrm{exp}(\alpha f)$. In this setting, for $n=2,\dots,N$ the joint GP prior distribution is factorized as 
\\[-20pt] 
\begin{equation}
\PP(\mathbf{f}_n,\mathbf{u}_n;\mathbf{f}_{n-1},\mathbf{X},\mathbf{Z}_n, \boldsymbol{\delta}_{\mathbf{Z}_n},\alpha)=\PP(\mathbf{f}_n\vert\mathbf{u}_n;\mathbf{f}_{n-1},\mathbf{X},\mathbf{Z}_n, \boldsymbol{\delta}_{\mathbf{Z}_n},\alpha)\PP(\mathbf{u}_n\vert\mathbf{Z}_n,\boldsymbol{\delta}_{\mathbf{Z}_n},\alpha)\PP(\alpha),
\label{eq:jp2}
\end{equation}
where $\PP(\alpha)$ is the prior of the new parameter $\alpha$,  $\PP(\mathbf{f}_n\vert\mathbf{u}_n;\mathbf{f}_{n-1},\mathbf{X},\mathbf{Z}_n,\boldsymbol{\delta}_{\mathbf{Z}_n},\alpha)=\Nagr(\mathbf{f}_n\vert \boldsymbol{\mu}_n,\mathbf{\Sigma}_n)$ and $\PP(\mathbf{u}_n\vert\mathbf{Z}_n,\boldsymbol{\delta}_{\mathbf{Z}_n},\alpha)=\Nagr(\mathbf{u}_n\vert \mathbf{0},k_n(\mathbf{Z}_n,\mathbf{Z}_n;\boldsymbol{\delta}_{\mathbf{Z}_n},\alpha))$. 

Also, since there is no dependency between $\mathbf{u}_n$ and $\mathbf{u}_{n-1}$, we introduce $\boldsymbol{\delta}_{\mathbf{Z}_n}\in\mathbb{R}^m$ as a unknown vector parameter representing optimal value for $\mathbf{u}_{n-1}$ which is used to evaluate the length scale value at the inducing location $\mathbf{Z}_n$ through $H(.)$. For $i,j=1,\dots,n_S$ \\[-35pt]
$$[{\boldsymbol{\mu}_n}]_i=\boldsymbol{\Gamma}_n(\mathbf{x}_i,\mathbf{f}_{n-1}^i)^T\mathbf{u}_n,$$
$$[\mathbf{\Sigma}_n]_{ij}=k_n(\mathbf{x}_i,\mathbf{x}_j;\mathbf{f}_{n-1}^i,\mathbf{f}_{n-1}^j,\alpha)-\boldsymbol{\Gamma}_n(\mathbf{x}_i,\mathbf{f}_{n-1}^i)^T k_n(\mathbf{Z}_n,\mathbf{Z}_n;\boldsymbol{\delta}_{\mathbf{Z}_n},\alpha)\boldsymbol{\Gamma}_n(\mathbf{x}_j,\mathbf{f}_{n-1}^j),$$\\[-30pt]
and $\boldsymbol{\Gamma}_n(\mathbf{x}_i,\mathbf{f}_{n-1}^i)=k_n(\mathbf{Z}_n,\mathbf{Z}_n;\boldsymbol{\delta}_{\mathbf{Z}_n},\alpha)^{-1}k_n(\mathbf{Z}_n,\mathbf{x}_i;\boldsymbol{\delta}_{\mathbf{Z}_n},\mathbf{f}_{n-1}^i,\alpha)$ for 
$\mathbf{f}_{n}^i:=(\mathbf{f}_{n})_i=f_{n}(\mathbf{x}_i)$. Using (\ref{eq:jp1}) and (\ref{eq:jp2}) the joint density of the outputs and parameters is estimated as \\[-20pt]
\begin{equation}
\PP(\mathbf{y},\{\mathbf{f}_n, \mathbf{u}_n\}_{n=1}^{N},\alpha)=\PP(\mathbf{y}\vert \mathbf{f}_N)\PP(\mathbf{f}_1,\mathbf{u}_1;\mathbf{X},\mathbf{Z}_1)\prod_{n=2}^{N}\PP(\mathbf{f}_n,\mathbf{u}_n;\mathbf{f}_{n-1},\mathbf{X},\mathbf{Z}_n,\boldsymbol{\delta}_{\mathbf{Z}_n},\alpha)\\[-20pt]
\label{eq:jp3}
\end{equation}
$$=\PP(\mathbf{y}\vert \mathbf{f}_N)\PP(\mathbf{f}_1\vert\mathbf{u}_1;\mathbf{X},\mathbf{Z}_{1}) \PP(\mathbf{u}_1;\mathbf{Z}_{1})\Big(\prod_{n=2}^{N}\PP(\mathbf{f}_n\vert\mathbf{u}_n;\mathbf{f}_{n-1},\mathbf{X},\mathbf{Z}_{n},\boldsymbol{\delta}_{\mathbf{Z}_n},\alpha) \PP(\mathbf{u}_n\vert\mathbf{Z}_{n},\boldsymbol{\delta}_{\mathbf{Z}_n},\alpha)\Big)\PP(\alpha).$$
 
Similar to \citet{Sal}, we assume that the posterior distribution of $\{\mathbf{u}_n\}_{n=1}^{N}$ is factorized between the layers, along with maintaining the conditional dependency between $\mathbf{f}_n$ and $\mathbf{u}_n$ as in the model (\ref{eq:jp3}). We also include the variational posterior distribution of the new parameter $\alpha$, $q(\alpha)$, in our DGP variational posterior. Therefore, the DGP variational posterior is \\[-20pt]
\begin{equation}
q(\{\mathbf{f}_n, \mathbf{u}_n\}_{n=1}^{N},\alpha)=\PP(\mathbf{f}_1\vert\mathbf{u}_1;\mathbf{X},\mathbf{Z}_{1}) q(\mathbf{u}_1)\Big(\prod_{n=2}^{N}\PP(\mathbf{f}_n\vert\mathbf{u}_n;\mathbf{f}_{n-1},\mathbf{X},\mathbf{Z}_{n},\boldsymbol{\delta}_{\mathbf{Z}_n},\alpha) q(\mathbf{u}_n)\Big) q(\alpha)
\label{eq:12al}
\end{equation}
where $q(\mathbf{u}_n)$ is chosen to be $\Nagr(\mathbf{m}_n,\mathbf{s}_n)$ such that $\mathbf{m}_n\in\mathbb{R}^{m}$ and $\mathbf{s}_n\in\mathbb{R}^{m\times m}$ $(n=1,\dots,N)$ and $q(\alpha)=\Nagr(m_{\alpha},s_{\alpha})$ for  parameters $m_{\alpha}$ and $s_{\alpha}$. With this specification of $q(\mathbf{u}_n)$, the inducing variables can be marginalized from each layer analytically as \\[-35pt]
\begin{align}
\label{eqn:mar1}
\begin{split}
q(\mathbf{f}_1\vert\mathbf{m}_1,\mathbf{s}_1,\mathbf{X},\mathbf{Z}_{1})&=\int \PP(\mathbf{f}_1\vert\mathbf{u}_1;\mathbf{X},\mathbf{Z}_{1}) q(\mathbf{u}_1)~d\mathbf{u_1}=\Nagr(\mathbf{f}_1 \vert \tilde{\boldsymbol{\mu}}_1,\tilde{\mathbf{\Sigma}}_1),\\
q(\mathbf{f}_2\vert\mathbf{m}_2,\mathbf{s}_2,\mathbf{f}_{1},\mathbf{X},\mathbf{Z}_{2},\boldsymbol{\delta}_{\mathbf{Z}_2},\alpha)&=\int \PP(\mathbf{f}_2\vert\mathbf{u}_2;\mathbf{f}_{1},\mathbf{X},\mathbf{Z}_{2},\boldsymbol{\delta}_{\mathbf{Z}_2},\alpha)q(\mathbf{u}_2)~d\mathbf{u}_2\\
&=\Nagr(\mathbf{f}_2\vert \tilde{\boldsymbol{\mu}}_2,\tilde{\mathbf{\Sigma}}_2),
\\[-20pt]
\end{split}
\end{align}
$$\vdots~~~~~~~~~$$\\[-70pt]
\begin{align*}
\begin{split}
q(\mathbf{f}_N\vert\mathbf{m}_N,\mathbf{s}_N,\mathbf{f}_{N-1},\mathbf{X},\mathbf{Z}_{N},\boldsymbol{\delta}_{\mathbf{Z}_N},\alpha)&=\int \PP(\mathbf{f}_N\vert\mathbf{u}_N;\mathbf{f}_{N-1},\mathbf{X},\mathbf{Z}_{N},\boldsymbol{\delta}_{\mathbf{Z}_N},\alpha)q(\mathbf{u}_N) q(\alpha)~d\mathbf{u}_N\\
&=\Nagr(\mathbf{f}_N\vert \tilde{\boldsymbol{\mu}}_N,\tilde{\mathbf{\Sigma}}_N)q(\alpha),
\end{split}
\end{align*}
where for $i,j=1,\dots,n_S$ \\[-25pt]
$$[\tilde{\boldsymbol{\mu}}_1]_i={\mu}_{\mathbf{m}_1,\mathbf{Z}_1}(\mathbf{x}_i)=\boldsymbol{\Gamma}_1(\mathbf{x}_i)^T\mathbf{m}_1,
$$
\begin{equation}
\label{eqn:mean}
[\tilde{\boldsymbol{\mu}}_n]_i={\mu}_{\mathbf{m}_n,\mathbf{Z}_n,\boldsymbol{\delta}_{\mathbf{Z}_n},\alpha}(\mathbf{x}_i,\mathbf{f}_{n-1}^i)=\boldsymbol{\Gamma}_n(\mathbf{x}_i,\mathbf{f}_{n-1}^i)^T\mathbf{m}_n,
\end{equation}
$$[\tilde{\mathbf{\Sigma}}_1]_{ij}={\Sigma}_{\mathbf{s}_1,\mathbf{Z}_1}(\mathbf{x}_i,\mathbf{x}_j)=k_1(\mathbf{x}_i,\mathbf{x}_j;\bm{\phi}_1)-\boldsymbol{\Gamma}_1(\mathbf{x}_i)^T\Big[k_1(\mathbf{Z}_1,\mathbf{Z}_1;\bm{\phi}_1)-\mathbf{s}_1\Big]\boldsymbol{\Gamma}_1(\mathbf{x}_j),$$ 
$$[\tilde{\mathbf{\Sigma}}_n]_{ij}=k_n(\mathbf{x}_i,\mathbf{x}_j;\mathbf{f}_{n-1}^i,\mathbf{f}_{n-1}^j,\alpha)-\boldsymbol{\Gamma}_n(\mathbf{x}_i,\mathbf{f}_{n-1}^i)^T\Big[k_n(\mathbf{Z}_n,\mathbf{Z}_n;\boldsymbol{\delta}_{\mathbf{Z}_n},\alpha)-\mathbf{s}_n\Big]\boldsymbol{\Gamma}_n(\mathbf{x}_j,\mathbf{f}_{n-1}^j),$$ 
which is $[\tilde{\mathbf{\Sigma}}_n]_{ij}={\Sigma}_{\mathbf{s}_n,\mathbf{Z}_n,\boldsymbol{\delta}_{\mathbf{Z}_n},\alpha}(\mathbf{x}_i,\mathbf{x}_j;\mathbf{f}_{n-1}^i,\mathbf{f}_{n-1}^j)$ --- see the supplementary materials for a derivation of (\ref{eqn:mar1}). Therefore, using (\ref{eqn:mar1}) we obtain\\[-20pt]
\begin{equation}
q(\{\mathbf{f}_n\}_{n=1}^{N},\alpha)=\Big(\prod_{n=1}^{N}\Nagr(\mathbf{f}_n \vert \tilde{\boldsymbol{\mu}}_n,\tilde{\mathbf{\Sigma}}_n)\Big)q(\alpha).
\label{eq:13al}
\end{equation}

The ELBO of our DGP, $\Lagr_{DGP}$, is obtained as following\\[-20pt]
$$\Lagr_{DGP}=\EE_{q(\{\mathbf{f}_n, \mathbf{u}_n\}_{n=1}^{N},\alpha)}\textrm{log}\Bigg(\frac{\PP(\mathbf{y},\{\mathbf{f}_n, \mathbf{u}_n\}_{n=1}^{N},\alpha)}{q(\{\mathbf{f}_n, \mathbf{u}_n\}_{n=1}^{N},\alpha)}\Bigg)$$\\[-30pt]
and by substituting (\ref{eq:jp3}), (\ref{eq:12al}), it can be formed as \\[-20pt]
$$\Lagr_{DGP}=\sum_{i=1}^{n_S} \EE_{q(\{\mathbf{f}_n\}_{n=1}^{N},\alpha)}\Big(~\textrm{log}~\PP(y_i\vert \mathbf{f}_{N}^i)~\Big) - \textrm{KL}\Big(q(\mathbf{u}_1)\parallel \PP(\mathbf{u}_1;\mathbf{Z}_{1})\Big)\\[-11pt]$$
\begin{equation}
- \EE_{q(\alpha)}\Big[ \sum_{n=2}^{N}\textrm{KL}\Big(q(\mathbf{u}_n) \parallel \PP(\mathbf{u}_n;\mathbf{Z}_{n},\boldsymbol{\delta}_{\mathbf{Z}_n},\alpha)\Big)\Big]-\textrm{KL}\Big(q(\alpha) \parallel \PP(\alpha)\Big).
\label{eq:dgpelbo}
\end{equation}
See the supplementary materials for a derivation of the ELBO. \\[-40pt]
\subsubsection{Evaluating the Evidence Lower Bound}

To evaluate the ELBO, we must compute the first expectation term at each design point $\mathbf{x}_i$ for $i=1,\dots,n_S$. That is,\\[-20pt]
$$E_i=\EE_{q(\{\mathbf{f}_n\}_{n=1}^{N},\alpha)}\Big(~\textrm{log}~\PP(y_i\vert \mathbf{f}_{N}^i)~\Big)=\EE_{q(\{\mathbf{f}_n\}_{n=1}^{N},\alpha)}\Big(~\textrm{log}~\PP(y_i\vert f_{N}(\mathbf{x}_i))~\Big).$$
$E_i$ can be approximated by Monte Carlo (MC) using a sample from the variational posterior in (\ref{eq:13al}). We do this using univariate Gaussian distributions through the re-parameterization trick \citep{kin}. Specifically, we first sample $\alpha_t\sim\Nagr(m_{\alpha},s_{\alpha})=q(\alpha)$ and $(\epsilon_{n}^i)_t\sim \Nagr(0,1)$, where $t=1,\dots,T$ and $n=1,\dots,N$ represent indices of MC sampling iterations and number of DGP layers, respectively. Then, the sampled variables $(\hat{\mathbf{{f}}}_{1}^i)_t\sim q(\mathbf{f}_{1}^i\vert\mathbf{m}_1,\mathbf{s}_1,\mathbf{x}_i,\mathbf{Z}_{1}) $ and $(\hat{\mathbf{{f}}}_{n}^i)_t\sim q(\mathbf{f}_{n}^i\vert\mathbf{m}_n,\mathbf{s}_n,(\hat{\mathbf{{f}}}_{n-1}^i)_t,\mathbf{x}_i,\mathbf{Z}_{n},\boldsymbol{\delta}_{\mathbf{Z}_n},\alpha_t)$ $(n=2,\dots,N)$ are drawn
recursively as \\[-30pt]
\begin{equation*}
(\hat{\mathbf{{f}}}_{1}^i)_t={\mu}_{\mathbf{m}_1,\mathbf{Z}_1}(\mathbf{x}_i)+ (\epsilon_{1}^i)_t \sqrt{{\Sigma}_{\mathbf{s}_1,\mathbf{Z}_1}(\mathbf{x}_i,\mathbf{x}_i)},
\label{eq:15al}
\end{equation*}
\begin{equation*}
(\hat{\mathbf{{f}}}_{n}^i)_t={\mu}_{\mathbf{m}_n,\mathbf{Z}_n, \boldsymbol{\delta}_{\mathbf{Z}_n},\alpha_t}(\mathbf{x}_i,(\hat{\mathbf{{f}}}_{n-1}^i)_t)+ (\epsilon_{n}^i)_t \sqrt{{\Sigma}_{\mathbf{s}_n,\mathbf{Z}_n,\boldsymbol{\delta}_{\mathbf{Z}_n},\alpha_t}(\mathbf{x}_i,\mathbf{x}_i;(\hat{\mathbf{{f}}}_{n-1}^i)_t)},
\label{eq:15all}
\end{equation*}
where $\hat{\mathbf{{f}}}_{n}^i=\hat{f}_n(\mathbf{x}^i)\in\mathbb{R}$. At each input $\mathbf{x}_i$, this procedure is repeated $T$ times to obtain an unbiased estimate by taking a Monte Carlo estimate, i.e. \\[-15pt]
$$E_i\approx\frac{1}{T}\sum_{t=1}^T{\textrm{log}~\PP\Big(y_i~\vert ~(\hat{\mathbf{{f}}}_{N}^i)_t\Big)},\\[-5pt]$$
where $(\hat{\mathbf{{f}}}_{N}^i)_t={\mu}_{\mathbf{m}_N,\mathbf{Z}_N,\alpha_t}(\mathbf{x}_i,(\hat{\mathbf{{f}}}_{N-1}^i)_t)+ (\epsilon_{N}^i)_t \sqrt{{\Sigma}_{\mathbf{s}_{N},\mathbf{Z}_{N},\boldsymbol{\delta}_{\mathbf{Z}_n},\alpha_t}(\mathbf{x}_i,\mathbf{x}_i;(\hat{\mathbf{{f}}}_{N-1}^i)_t)}$. Also all the $\textrm{KL}$ terms in the ELBO can be computed analytically, and \\[-20pt]
$$ \EE_{q(\alpha)}\Big[ \sum_{n=2}^{N}\textrm{KL}\Big(q(\mathbf{u}_n) \parallel \PP(\mathbf{u}_n;\mathbf{Z}_{n},\boldsymbol{\delta}_{\mathbf{Z}_n},\alpha)\Big)\Big]$$ is estimated by sampling $\alpha\sim\Nagr(m_{\alpha},s_{\alpha})$. 

To achieve scalability for large $n_S$, the sum over $E_i$ can be estimated using sub-sampling. That is, \\[-40pt]
\begin{align}
\begin{split}
\Lagr_{DGP}&\approx\frac{n_S}{|\Bagr|}\sum_{i\in |\Bagr|}E_i - \textrm{KL}\Big(q(\mathbf{u}_1)\parallel \PP(\mathbf{u}_1;\mathbf{Z}_{1})\Big) - \EE_{q(\alpha)}\Big[ \sum_{n=2}^{N}\textrm{KL}\Big(q(\mathbf{u}_n) \parallel \PP(\mathbf{u}_n;\mathbf{Z}_{n},\boldsymbol{\delta}_{\mathbf{Z}_n},\alpha)\Big)\Big]\\
&-\textrm{KL}\Big(q(\alpha) \parallel \PP(\alpha)\Big),\\[-60pt]
\label{eqn:ev2}
\end{split}
\end{align} 
where $\Bagr$ represents a batch or sub-sample of data. The reason for using sub-sampling is that it decreases the time needed to evaluate the ELBO and thus perform optimization. Scalability is an imporant  aspect of DSVI, which is preserved in our modified version of the ELBO in (\ref{eqn:ev2}). 

With that said, to approximate the ELBO in our settings, three sources of stochasticity are present in our approximation: (i) estimating $E_i$ through Monte Carlo sampling, (ii) sub-sampling the data, and (iii) approximating expectation of KL terms with sampling from $q(\alpha)$. Sources of (i) and (ii) are present in the original DSVI approach \citep{Sal}, although in our case we have an extra step in (i) that samples from $q(\alpha)$. Since our inference approach aims to explore the posterior distribution of the new parameter $\alpha$, the extra source of stochasticity (iii) is taken into account for evaluating the ELBO of our DGP. 

The bound is maximized with respect to the variational parameters $\mathbf{m}_n$, $\mathbf{s}_n$, $m_\alpha$, $s_{\alpha}$, inducing related parameters $\mathbf{Z}_n$, $\boldsymbol{\delta}_{\mathbf{Z}_n}$ and model parameters (e.g. covariance function parameters) in each layer. We perform the optimization of the ELBO using a loop procedure consisting of an optimization step with the natural gradient to perform the optimization with respect to the variational parameters of the last layer, then an optimization step using the momentum optimizer ADAM \citep{adam} to perform the optimization for the other parameters in all layers. This optimization procedure has been adopted in DSVI and has shown better results than using only the Adam optimizer for all the layers and parameters \citep{salim}. 

We will compare performance of the resulting DGPs in Section \ref{sec:ill3} for three cases: (i) $\alpha$ is estimated, (ii) $\alpha$ is optimized and (iii) $\alpha=1$. Case (iii) is equivalent to the DGP in \citet{Dunlop} with the choice of $H(f)=\mathrm{exp}(f)$. In this case, the ELBO of the DGP is\\[-20pt]
\begin{equation}
\Lagr_{DGP}=\sum_{i=1}^{n_S}\EE_{q(\{\mathbf{f}_n\}_{n=1}^{N})}\Big(~\textrm{log}~\PP(y_i\vert \mathbf{f}_{N}^i)~\Big)-  \sum_{n=1}^{N}\textrm{KL}\Big(q(\mathbf{u}_n) \parallel \PP(\mathbf{u}_n;\mathbf{Z}_{n},\boldsymbol{\delta}_{\mathbf{Z}_n},\alpha)\Big),\\[-20pt]
\label{eq:dgpelbo2}
\end{equation}
where $$q(\{\mathbf{f}_n\}_{n=1}^{N})=\prod_{n=1}^{N}\Nagr(\mathbf{f}_n \vert \tilde{\boldsymbol{\mu}}_n,\tilde{\mathbf{\Sigma}}_n).$$
and $\alpha=1$. Case (ii) treats $\alpha$ as a parameter to be optimized within \eqref{eq:dgpelbo2}. \\[-40pt]
\subsection{Prediction}\label{sec:pred3} 

Given the variational posterior, we can sample the predictive distribution of $y(.)$ at a new input, $\mathbf{x}^*$. To do this, first we sample $(\epsilon^*_{n})_r\sim \Nagr(0,1)$ and $\hat{\alpha}_r\sim\Nagr(\hat{m}_{\alpha},\hat{s}_{\alpha})$, where $r=1,\dots,R$ and $n=1,\dots,N$ represent the indices of sampling iterations and the number of DGP layers, respectively. Then conditioned on the sampled estimated parameter $\hat{\alpha}_r$ and estimated variational parameters $\hat{\mathbf{Z}}_n$, $\hat{\mathbf{m}}_n$, $\hat{\mathbf{s}}_n$, $\hat{\boldsymbol{\delta}}_{\mathbf{Z}_n}$ obtained through optimizing the ELBO, the sampled variables $(\hat{f}_{1}^*)_r\sim q(\mathbf{f}_{1}\vert\hat{\mathbf{m}}_1,\hat{\mathbf{s}}_1,\mathbf{x}^*,\hat{\mathbf{Z}}_{1})$ and $(\hat{f}_{n}^*)_r\sim q(\mathbf{f}_{n}\vert\hat{\mathbf{m}}_n,\hat{\mathbf{s}}_n,(\hat{f}_{n-1}^*)_r,\mathbf{x}^*,\hat{\mathbf{Z}}_{n},\hat{\boldsymbol{\delta}}_{\mathbf{Z}_n},\hat{\alpha}_r)$ $(n=2,\dots,N)$ are drawn at $\mathbf{x}^*$ recursively 
as \\[-20pt]
\begin{equation*}
(\hat{f}_{1}^*)_r={\mu}_{\hat{\mathbf{m}}_1,\hat{\mathbf{Z}}_1}(\mathbf{x}^*)+ (\epsilon^*_{1})_r \sqrt{{\Sigma}_{\hat{\mathbf{s}}_1,\hat{\mathbf{Z}}_1}(\mathbf{x}^*,\mathbf{x}^*)},
\label{eq:15al}
\end{equation*}
\begin{equation*}
(\hat{f}_{n}^*)_r={\mu}_{\hat{\mathbf{m}}_n,\hat{\mathbf{Z}}_n,\hat{\boldsymbol{\delta}}_{\mathbf{Z}_n},\hat{\alpha}_r}(\mathbf{x}^*,(\hat{f}_{n-1}^*)_r)+ (\epsilon^*_{n})_r \sqrt{{\Sigma}_{\hat{\mathbf{s}}_n,\hat{\mathbf{Z}}_n,\hat{\boldsymbol{\delta}}_{\mathbf{Z}_n},\hat{\alpha}_r}(\mathbf{x}^*,\mathbf{x}^*;(\hat{f}_{n-1}^*)_r)},\\[-10pt]
\label{eq:15all}
\end{equation*}
where $\hat{f}_{n}^*=\hat{f}_n(\mathbf{x}^*)\in\mathbb{R}$ and the posterior predictive means and variances, ${\mu}_{\hat{\mathbf{m}}_1,\hat{\mathbf{Z}}_1}(\mathbf{x}^*)$, ${\mu}_{\hat{\mathbf{m}}_n,\hat{\mathbf{Z}}_n,\hat{\boldsymbol{\delta}}_{\mathbf{Z}_n},\hat{\alpha}_r}(\mathbf{x}^*,(\hat{f}_{n-1}^*)_r)$, and ${\Sigma}_{\hat{\mathbf{s}}_1,\hat{\mathbf{Z}}_1}(\mathbf{x}^*,\mathbf{x}^*)$,  ${\Sigma}_{\hat{\mathbf{s}}_n,\hat{\mathbf{Z}}_n,\hat{\boldsymbol{\delta}}_{\mathbf{Z}_n},\hat{\alpha}_r}(\mathbf{x}^*,\mathbf{x}^*;(\hat{f}_{n-1}^*)_r)$, are given by (\ref{eqn:mean}). The optimized inducing locations $\hat{\mathbf{Z}}_n$ play an analogous role of the design $\mathbf{X}$ in usual kriging formula (\ref{eqn:eq11}), which scale up computations in prediction as well. This sampling procedure is repeated $R$ times to obtain the posterior sample of of $y(.)$ at $\mathbf{x}^*$ along with incorporating all sources of uncertainty from $\alpha$ and hidden layers in the predictive distribution.\\[-40pt]

\section{Illustrative example}\label{sec:ill3}

We compare our emulation methodology on a 2D synthetic example across three cases: (i) $\alpha$ is estimated, (ii) $\alpha$ is optimized and (iii) $\alpha=1$ and is fixed. We fit a DGP with two hidden layers, where the DGP is constructed using the methodology above and using the Matern covariance function with $\nu=2.5$ throughout.
To compare the performance of the DGP, the following criteria are calculated:\\
(1) Nash–Sutcliffe Efficiency (NSE) \\[-25pt]
\begin{equation*}
\label{eq:mspe}
\mathrm{NSE}=1-\frac{\mathrm{MSPE}}{\mathrm{Var}},\\[-10pt]
\end{equation*} 
where Var is the variance of true values $y(.)$ at prediction inputs $\mathbf{x}_1^*,\dots,\mathbf{x}_p^*$ and MSPE represents the mean square prediction error formulated as 
\\[-25pt]
\begin{equation*}
\mathrm{MSPE}=\frac{1}{p}\sum_{i=1}^{p}\Big(\hat{y}(\mathbf{x}_i^*)-y(\mathbf{x}_i^*)\Big)^2,\\[-10pt]
\end{equation*}
where $\hat{y}(\mathbf{x}_i^*)$ is the posterior  predictive mean at the new input $\mathbf{x}_i^*$. Similar to the coefficient of determination, the NSE \citep{nash} attempts to measure the proportion of variation that can be explained by a predictive model. NSE values close to 1 indicate that the emulator has performed well in terms of prediction accuracy.\\
(2) 95$\%$ Coverage Probability (CP) is the proportion of the true values $y(.)$ at prediction inputs that lie within their 95\% credible interval. 

We consider a $2$-d piece-wise computer model 
\begin{equation*}
g(x_1, x_2) = \begin{cases}
1.3 & x_1 \in [0.66,0.91]  \mbox{ and } x_2 \in [0.4,0.91]\\
2.2 & x_1 \in [0.1,0.5]  \mbox{ and } x_2 \in [0.6,0.92]  \\
3.5 & x_1 \in [0.15,0.6] \mbox{ and } x_2 \in [0.1,0.52]\\
0   & \text{o.w}. 
\end{cases} ~~~~~,~~~ x_1,x_2\in[0,1]
\label{eq:2dfunc}
\end{equation*}
shown in panel (a) of Figure \ref{fig:trueg}, which illustrates the type of discontinuity we expect in the COMPAS model. 
\begin{figure}[H]
\centering
\begin{tabular}[c]{cc}
\begin{subfigure}[b]{0.49\textwidth}
    \includegraphics[width=\textwidth]{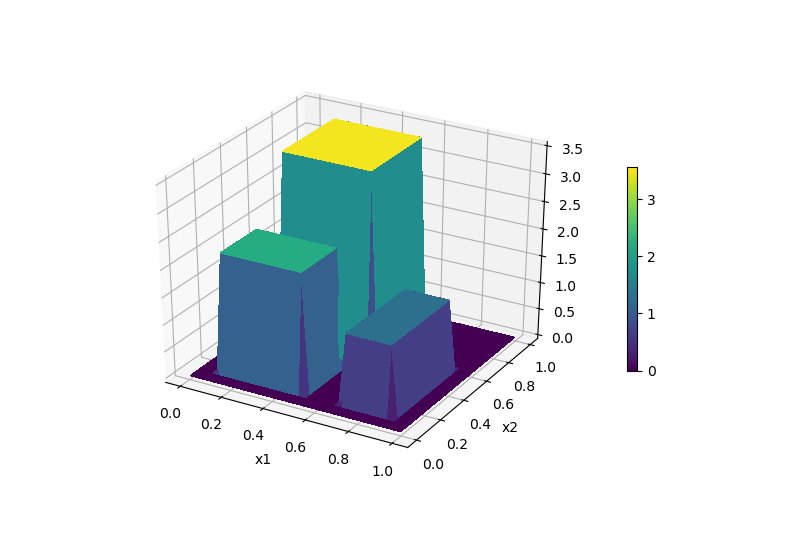}
    \caption{}
    
  \end{subfigure}
\begin{subfigure}[b]{0.32\textwidth}
    \includegraphics[width=\textwidth]{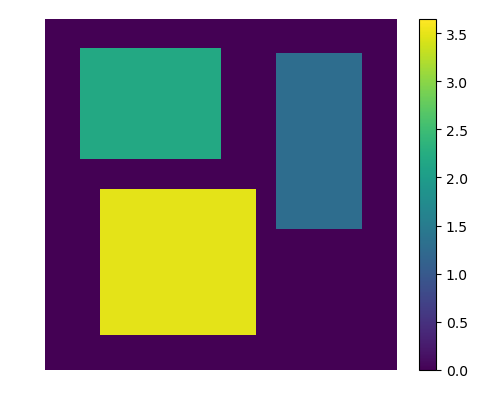}
    \caption{}
    
  \end{subfigure}
\end{tabular} 
\caption{(a) 2-d illustrative computer model with regions of discontinuities (b) Heatmap of true function outputs at prediction points}

  \label{fig:trueg}
\end{figure}
We fit a two hidden layer DGP with the training data evaluated on a $25$ by $25$ grid in $\mathcal{X}=[0,1]^2$, using $m=200$ inducing points in each layer for each of our three cases. In case (i), a normal distribution $\Nagr(3.5,1)$ is chosen for $\PP(\alpha)$ and $q(\alpha)$ is initialized with $\Nagr({m_{\alpha}}_{\mathrm{ini}},{s_{\alpha}}_{\mathrm{ini}})$ where ${m_{\alpha}}_{\mathrm{ini}}=3$ and ${s_{\alpha}}_{\mathrm{ini}}=1$. The response is evaluated over the prediction set, a $70\times70$ grid in $\mathcal{X}=[0,1]^2$, by sampling from the resulting predictive posterior ($5000$ samples) in all cases (i), (ii) and (iii). 

\begin{figure}[h!]
\centering
\begin{tabular}[c]{ccc}
\begin{subfigure}[b]{0.32\textwidth}
    \includegraphics[width=\textwidth]{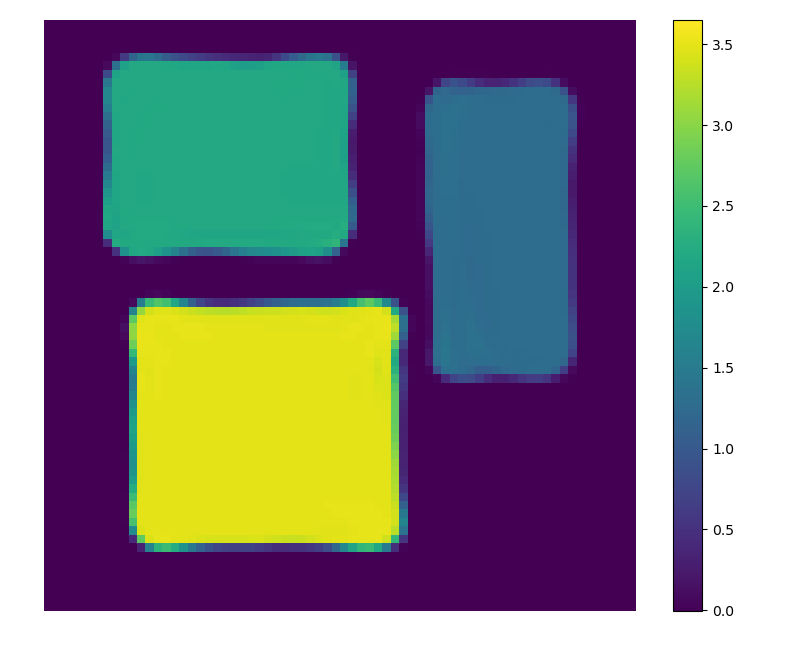}
    \caption{NSE = $91.64\%$}
  \end{subfigure}
\begin{subfigure}[b]{0.32\textwidth}
    \includegraphics[width=\textwidth]{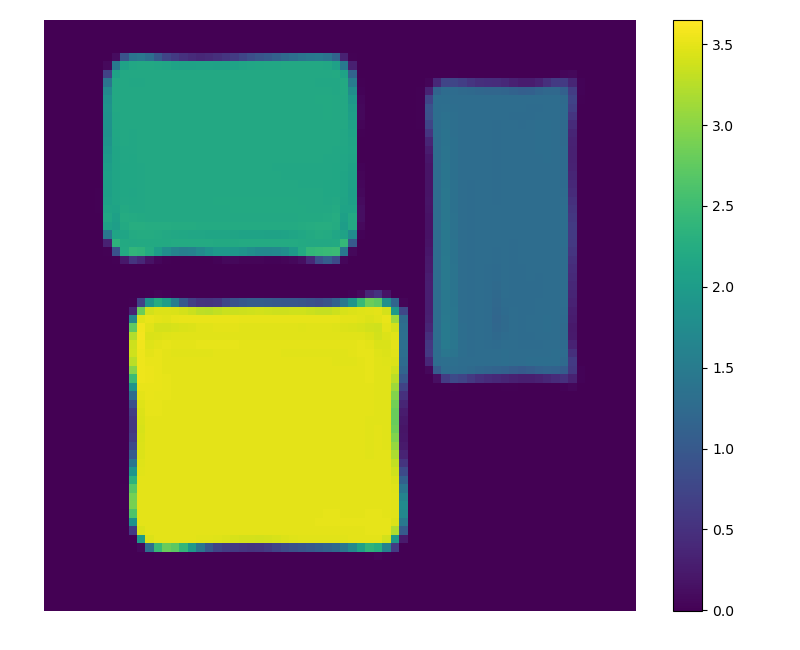}
    \caption{NSE = $91.41\%$}
  \end{subfigure}  
 \begin{subfigure}[b]{0.32\textwidth}
    \includegraphics[width=\textwidth]{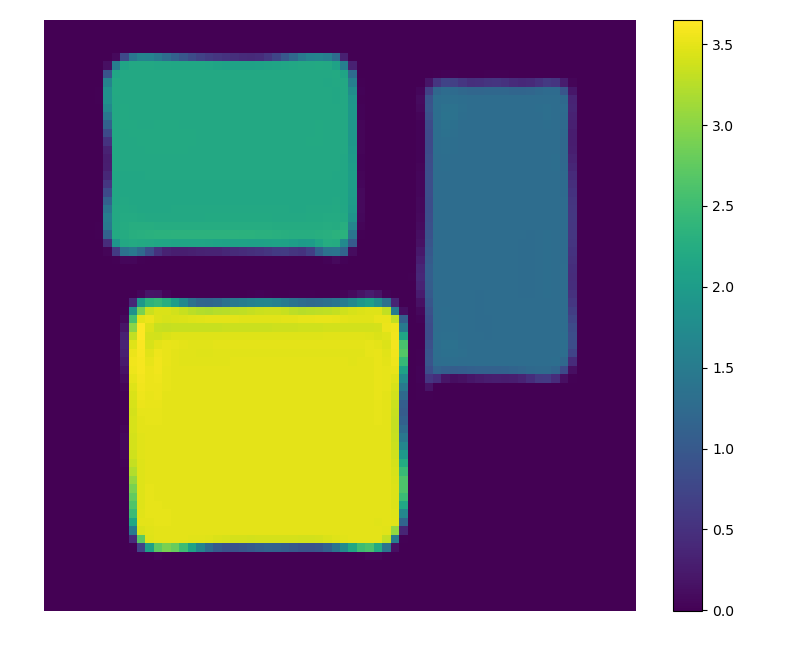}
    \caption{NSE = $88.28\%$}
  \end{subfigure}
\end{tabular} 
\caption{Heatmap of the predictions in case of (a) $\alpha$ is estimated, (b) $\alpha$ is optimized, (c) $\alpha=1$. Plots share the same color bar as given in the left side of each, where brighter colors indicate greater predicted values}
  \label{fig:2dpred}
\end{figure}
\begin{figure}[H]
\centering
\begin{tabular}[c]{ccc}
\begin{subfigure}[b]{0.32\textwidth}
    \includegraphics[width=\textwidth]{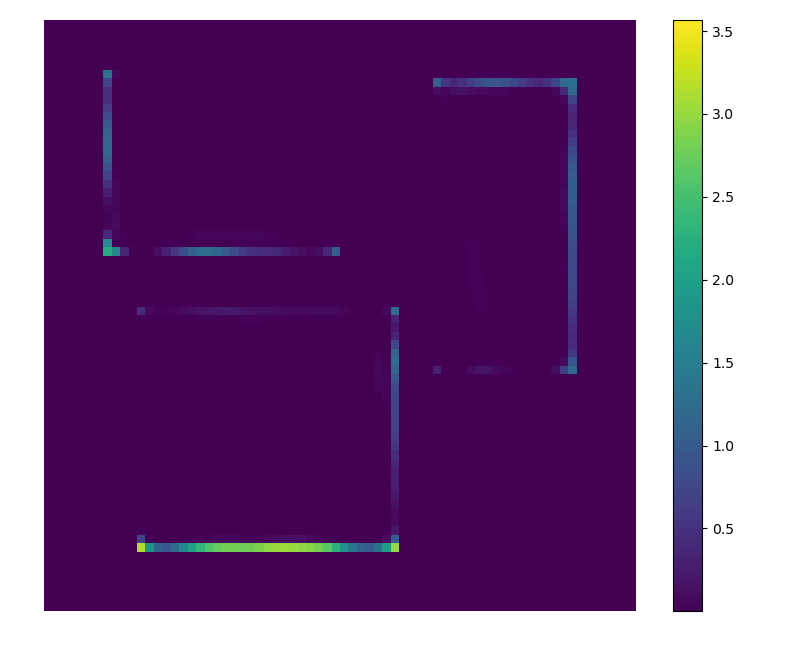}
    \caption{}
  \end{subfigure}
\begin{subfigure}[b]{0.32\textwidth}
    \includegraphics[width=\textwidth]{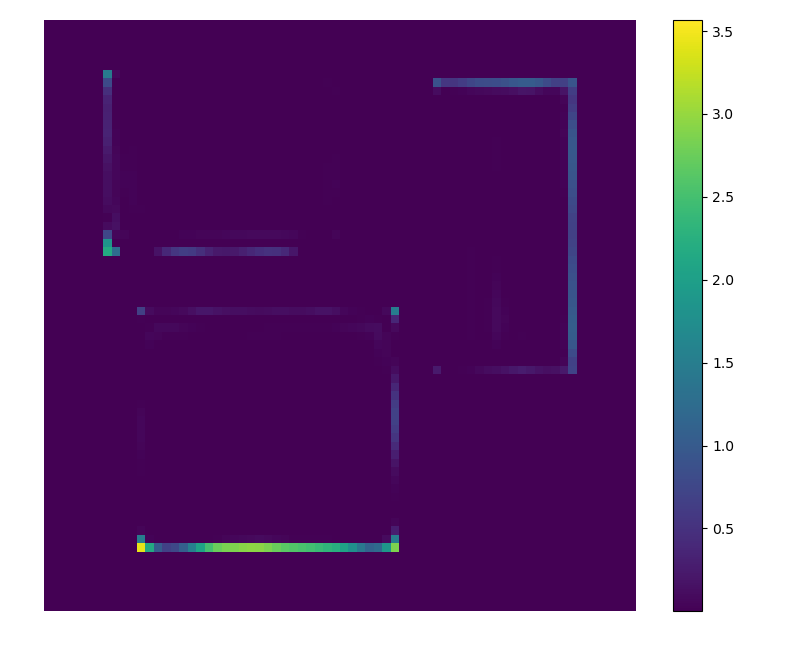}
    \caption{}
  \end{subfigure}  
 \begin{subfigure}[b]{0.32\textwidth}
    \includegraphics[width=\textwidth]{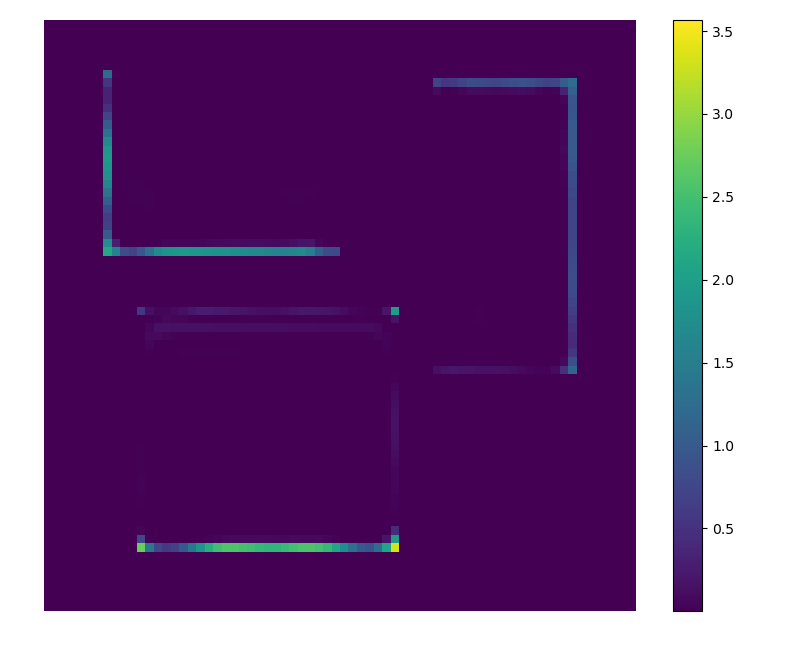}
    \caption{}
  \end{subfigure}
\end{tabular} 
\caption{Heatmap of absolute prediction errors in case of (a) $\alpha$ is estimated, (b) $\alpha$ is optimized, (c) $\alpha=1$. Plots share the same color bar as given in the left side of each, where brighter colors indicate larger errors.
}
 \label{fig:2dabser}
\end{figure}
The true response values at prediction points is shown in a heatmap plot in panel (b) of Figure \ref{fig:trueg}. 
The predictions and absolute prediction errors are compared in panels (a), (b) and (c) of Figures \ref{fig:2dpred} and \ref{fig:2dabser} in cases (i), (ii) and (iii), respectively. In the heatmaps, brighter color corresponds to larger predicted value and larger absolute prediction error. As seen in Figures \ref{fig:2dabser}, in all three cases, larger errors are observed around the boundaries of the three regions, where values of the model response change between zero and positive outputs. In panels (a) and (b) where that parameter $\alpha$ is estimated and optimized, it is evident that the performance of the DGP around the boundaries is greatly improved.
To numerically assess the prediction and uncertainty performance of the DGP model in all cases, $95\%$ CP and NSE are computed and displayed in Table \ref{table:t2d1}. As shown in the table the largest $95\%$ CP in the predictive DGP model is reached in the case of estimating parameter $\alpha$. This is due to including the extra uncertainty from estimating $\alpha$. Also, the DGP model where the posterior distribution for $\alpha$ is estimated ($\hat{q}(\alpha)=\Nagr(\hat{m}_{\alpha},\hat{s}_{\alpha})$ for $\hat{m}_{\alpha}=3.2452$ and $\hat{s}_{\alpha}=0.0308$) explains the largest amount of the response variability, although a similar result is achieved by optimizing $\alpha$ ($\alpha_{opt}=3.3005$).

\begin{table}[h!]
		\caption{Prediction accuracy of the DGP for three different methods}	
		\centering
\begin{tabular}{cccc}
		
			\hline
			 & $\alpha_{est}$ & $\alpha_{opt}$ &$\alpha=1$
			 \\ [0.4ex]
			\hline\\
			CP & $93.82\%$ & $92.69\%$ & $90.27\%$ \\[0.4ex]
		
			$\mbox{NSE}$ & $91.64\%$ & $91.41\%$ & $88.28\%$ \\[0.4ex]\\
			\hline
		\end{tabular}
	
\label{table:t2d1}
\end{table}

\section{Emulating the COMPAS Model}\label{num:com}

The proposed methodology is motivated by the need to emulate the COMPAS model. The input and the output variables of the COMPAS model are displayed in Table \ref{table:comd}. The inputs that were varied correspond to the initial conditions of the binary star system (see \citet{li} for details).
The output of the model is the chirp mass of a BBH merger. If no BBH merger is observed, the output would be ``NA'', although in our work, the chirp mass is considered to be zero.

\begin{table}[h!]
		\caption{Input and output of COMPAS model}
		\begin{center}\footnotesize{
\begin{tabular}{lc}
			\hline
			\textbf{Input}& \textbf{Range} \\[0.4ex]%
			\hline\\
			\textbf{Initial conditions:} $\mathbf{x}$ & \\[0.4ex]\\
			$m_1$ : the mass of the initially more massive star & [8,150]$M_{\odot}$ \\[0.4ex]
			$m_2$ : the mass of the initially less massive star&(0.1 $M_{\odot}$, $m_1$]\\[0.4ex]
			a: the initial orbital separation & [0.01,1000]AU \\[0.4ex]
			v$_i$ : supernova natal kick vector for supernova $i$ &\\[0.4ex]
			for $i = 1, 2$ including :  &\\[0.4ex]
			~~$v_i$ - magnitude of the supernova natal kick
            & [0,$\infty$)\\[0.4ex]
            ~~(km $s^{-1}$)&\\[0.4ex]
			~~${\theta}_i$ - polar angle defining the direction of the natal& [0,$\pi$]\\[0.4ex]
			~~~~~~~kick &\\[0.4ex]
			~~${\phi}_i$ - azimuthal angle defining the direction of the & [0,$2\pi$]\\[0.4ex]
			~~~~~~~natal kick & \\[0.4ex]
			~~${\omega}_i$ - mean anomaly & [0,$2\pi$]\\[0.4ex]\\
			\hline
			\textbf{Output}&\\[0.4ex]%
			\hline\\
			${\mathcal{M}_c}$: chirp mass of BBH & (0,150)$M_{\odot}$ or NA\\[0.4ex]\\
			\hline
		\end{tabular}}		
\label{table:comd}
\end{center}
\end{table}

The challenges in emulation of COMPAS model (e.g., see  Section \ref{sec:app}) are the regions of discontinuities in the response surface of the chirp mass and the large number of simulation runs with very low success rate for BBH formation. We apply our proposed emulation method to two million computer model runs where roughly $24\%$ of the simulations resulted in a chirp mass output. In the remaining simulations, no BBH was formed and thus no chirp mass is computed. 
\begin{figure}[h!]
\centering
\begin{tabular}[c]{c}
\begin{subfigure}[b]{1.0\textwidth}
    \includegraphics[width=\textwidth]{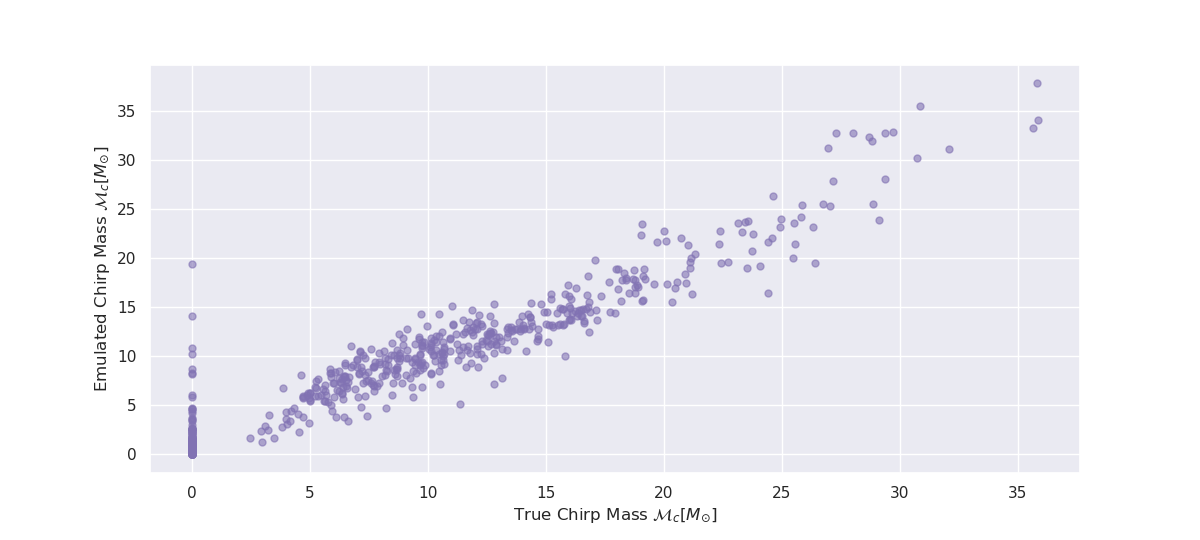}
  \end{subfigure}

\end{tabular} 

\centering
\begin{tabular}[c]{c}
\begin{subfigure}[b]{1.0\textwidth}
    \includegraphics[width=\textwidth]{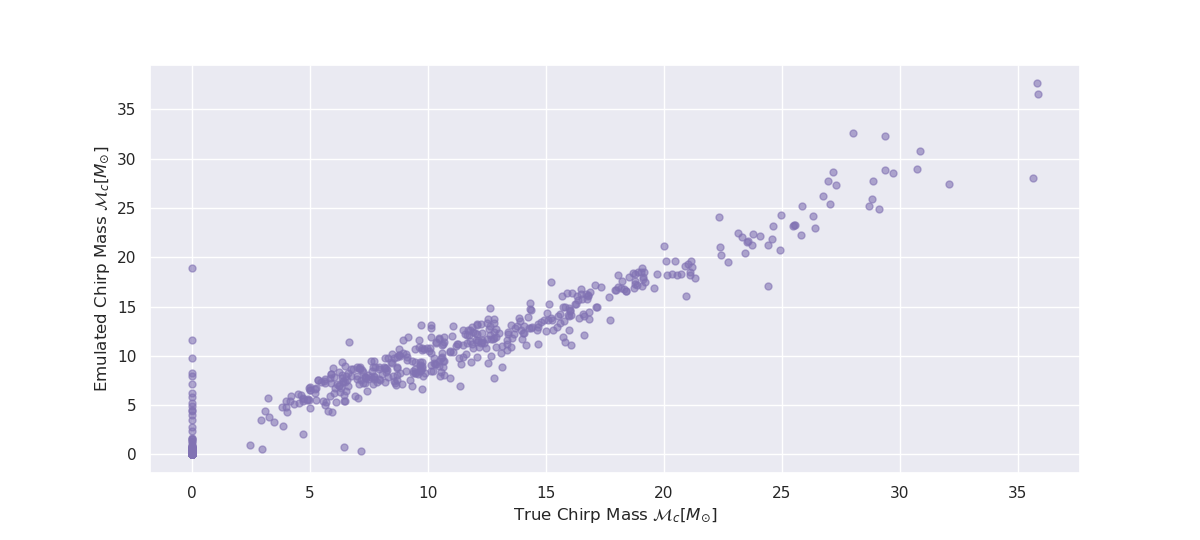}
  \end{subfigure}
\end{tabular} 
\caption{Emulated chirp mass against the true chirp mass for the DGP with two hidden layers (upper) and the DGP with three hidden layers (lower).}

\label{fig:emutvsp}
\end{figure}
We standardized input variables of the simulation data to the 11-dimensional unit hypercube $[0, 1]^{11}$. A random prediction set with size of $1000$ including 450 successful simulations (active points) was held out from the data to evaluate performance of our DGP emulator and the remaining simulations were used for constructing the emulator. 

We fit the DGP with two and three layers with $100$ inducing points in each layer.The DGPs are constructed using 
Matern covariance functions with $\nu=2.5$. For training, we approximated the ELBO (\ref{eqn:ev2}) with a batch size of $10,000$ to achieve scalability. Parameters $m_\alpha$ and $s_{\alpha}$ were found by maximizing the ELBO to obtain an estimate of the variational posterior distribution of $\alpha$ (i.e. $\hat{q}(\alpha)=\Nagr(\hat{m}_{\alpha},\hat{s}_{\alpha})$). Emulation of the response is done for the the prediction set with $5000$ samples from the resulting predictive posterior distribution. 

Figure \ref{fig:emutvsp} shows the emulated chirp mass against the true chirp mass at 1000 prediction points using the DGP with two and three hidden layers. As seen in the figure, most of the active and non-active points lie at or near the 45 degree line, meaning that the proposed method appears successful at emulating the response. The largest errors appear where the true response was zero, but the emulator predicts a non-zero value and vice versa. Comparing two and three hidden layer DGPs in Figure \ref{fig:emutvsp}, we see that points are more tightly centred around the 45 degree line for the three hidden layer model (lower plot).

\begin{figure}[h!]
\centering
\begin{tabular}[c]{c}
 \begin{subfigure}[b]{1.0\textwidth}
    \includegraphics[width=\textwidth]{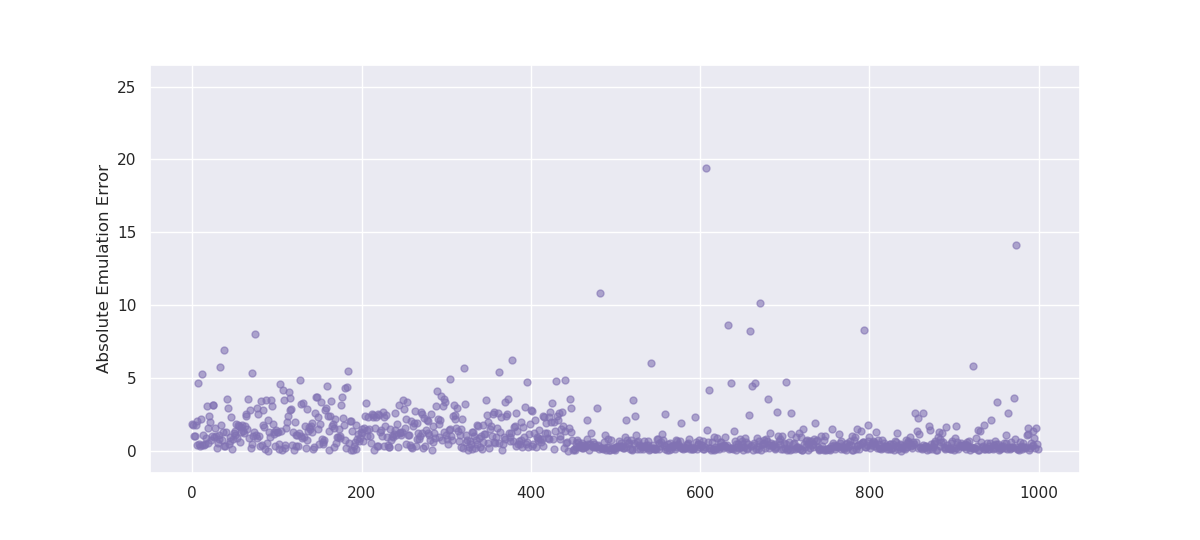}
  \end{subfigure}
\end{tabular} 

\centering
\begin{tabular}[c]{c}
 \begin{subfigure}[b]{1.0\textwidth}
    \includegraphics[width=\textwidth]{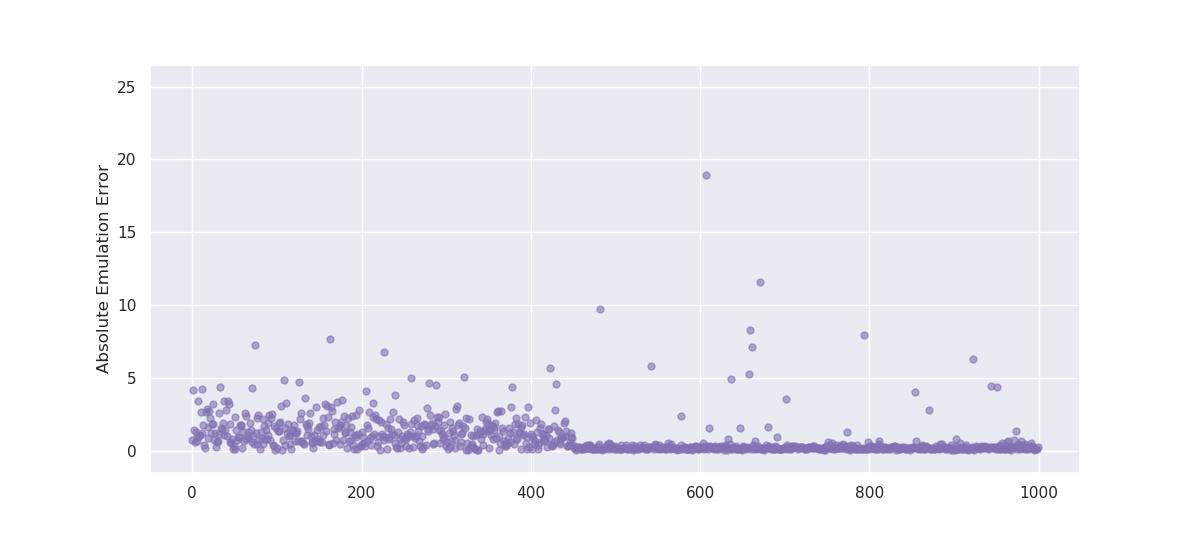}
  \end{subfigure}
\end{tabular} 
\caption{Absolute emulation errors for the DGP with two hidden layers (upper) and the DGP with three hidden layers (lower).}
\label{fig:emuabs}
\end{figure}

The absolute emulation errors are plotted in Figure \ref{fig:emuabs}. Looking at the figure, most of the active and non-active points have small errors close to zero, although there are large errors for a few predictions for reasons explained for previous figure. Also we also see from the the lower plot that there are more small errors than in the upper plot, meaning that the three layer DGP reduces the emulation error compared with the two layer DGP.

Table \ref{table:comt} displays the performance criteria including NSE and $95\%$ CP computed for the DGP emulator with two and three hidden layers. As it is expected, the predictive model explains a larger proportion of the response variability (NSE=$95.76\%$) in the three layer DGP, although its $95\%$ CP is less than the two layer DGP. The estimated parameters $\hat{m}_{\alpha}$ and $\hat{s}_{\alpha}$ displayed in the table show that smoothness parameter $\alpha$ is estimated around $0.7$. We attribute it to the size of the data and the higher dimensional input space of the COMPAS model. The training time of this large data set shown in the table illustrate how our emulation method can be computationally efficient. Also it shows that increasing number of DGP layers from two to three increases the computational time.
\vspace{5mm}
\begin{table}[h!]
		\caption{COMPAS emulation results using two and three hidden layer DGP }
		\centering
		\begin{tabular}{c c c c c c}
			\hline\hline
			\# of Layer & $\hat{m}_{\alpha}$& $\hat{s}_{\alpha}$ & NSE & CP & Training Time (1 iter)  \\ [0.1ex] 
			\hline
			2&0.7484& 0.0003& 94.31\%& 91.7 &0.31 s \\[0.1ex]
			3&0.7695& 0.0002& 95.76\%& 90.4 &0.35 s\\[0.1ex] 
			\hline
		\end{tabular}
		\label{table:comt}

	\end{table}   
\\[-40pt]
\section{Summary and Discussion}\label{sec:sum3}

In this paper, the DGP of \citet{Dunlop} was investigated and modified. This work was motivated by the application of emulation of COMPAS model, a binary population synthesis model that simulates the formation of binary black holes (BBHs). We proposed a non-stationary DGP emulator which can be adapted to the type of discontinuities found on the COMPAS response surface, and scales to the large number of simulation runs. Our approach can be applied 
to a large class of complex computer models, and scales to arbitrarily large simulation designs as well. 

We generalized the notation of two broad forms of DGP \citep{law,Dunlop} to emphasize their differences and properties. Specifically, we addressed exactly what the DGP forms would look like if the data were actually a realization of a stationary GP in two propositions. Some advantages of the DGP of \citet{Dunlop} such as operating covariance functions on the original input space rather than the warped input space, led us to consider it in our framework. Our DGP emulator was proposed by modifying the \citet{Dunlop} formulation through introducing a new parameter that controls smoothness of the DGP layers. The impact of the proposed parameter on the level of smoothness of the hidden layers was theoretically illustrated and numerically visualised through DGP realizations. Particularly, in Proposition \ref{pro:alpha}, we proved that what is happening in the non-stationary covariance function as the proposed parameter $\alpha$ changes from zero to infinity.

Doing inference on the DGP emulator using sampling methods such as MCMC is computationally infeasible in large designs. Hence, we employed a VI approach to overcome the computational issues and adapted it to be able to estimate the posterior distribution of the smoothness parameter $\alpha$. Specifically, we adapted the DVSI method introduced by \citet{Sal} (i) to be suitable for the DGP model of \citet{Dunlop} modified by our proposed parameter $\alpha$ and (ii) to our framework, emulation of deterministic computer models. Our modified approach allowed us to explore the posterior distribution of the smoothness of the model response surface and was demonstrated to preserve accuracy with uncertainty measures for arbitrary large designs. 

Further possible innovations to the DGP model of \citet{Dunlop} could be through (i) specifying an $\alpha$ for each coordinate dimension of the input space and/or at each non-stationary layer separately, and (ii) changing the dimensionality between layers. By combining methods (i) and (ii) new different variants of the DGP can be proposed. Although, these new DGP variants get the benefit of having multidimensional layers in the DGP from (\ref{eq:44}) and explicitly non-stationary covariance functions in the DGP model through (\ref{eq:66}), more complexities are introduced into the original model (\ref{eq:44}) by adding extra unobserved latent variables in each dimension and adding extra parameters to be estimated. We defer a more thorough investigation of these new variants to future work. \\[-35pt]

\section{Supplementary Materials}
The supplemental materials contain (1) proofs of Propositions \ref{eq:lfix}, \ref{eq:dfix}, \ref{pro:alpha} and Corollary \ref{eq:cor}, and (2) the derivations of marginalizing inducing variables in (\ref{eqn:mar1}) and the ELBO of our DGP in \ref{eq:dgpelbo}. \\[-35pt]

\bibliographystyle{plainnat}

\appendix

\section{Supplemental Material for Deep Gaussian Process Emulation and Uncertainty Quantification for Large Complex Computer Experiments}

~~~Proof for Proposition \ref{eq:lfix}\\[-40pt]
\begin{proof}
$\Rightarrow$) Let $f_{n,l}(\mathbf{x})\vert f_{n-1}(\mathbf{x})$ be a stationary GP. This implies that  $k_{n,l}(f_{n-1}(\mathbf{x});\bm{\phi}_{n})$ is a stationary covariance function, where the input space is not warped through $f_{n-1}(.)$. Hence $f_{n-1}(\mathbf{x})$ must be a linear transformation of $\mathbf{x}$. i.e there exists a vector of constants $\mathbf{c}\in\mathbb{R}^d$ such that $f_{n-1}(\mathbf{x})=\mathbf{c}\odot\mathbf{x}$ for any $\mathbf{x}\in\mathcal{X}\subseteq\mathbb{R}^d$.

$\Leftarrow$) Let $f_{n-1}(\mathbf{x})=\mathbf{c}\odot\mathbf{x}$ for a vector of constants $\mathbf{c}\in\mathbb{R}^d$. This implies $k_{n,l}(\mathbf{c}\odot\mathbf{x};\bm{\phi}_{n})$ where $\mathbf{c}$ scales the length scale vector in $\bm{\phi}_{n}$. As a result the covariance function $k_{n,l}(.,.)$ operates on the input space. So, $f_{n,l}(\mathbf{x})\vert f_{n-1}(\mathbf{x})$ is a stationary GP.
\end{proof}

Proof for Proposition \ref{eq:dfix}\\[-40pt]
\begin{proof}
$\Rightarrow$) Let $f_n(\mathbf{x})\vert f_{n-1}(\mathbf{x})$ be a stationary GP. This implies that $k_{n}(\mathbf{x};\bm{\phi}_{n}(f_{n-1}(\mathbf{x})))$ is a stationary covariace function. Hence covariance parameter $\bm{\phi}_{n}(f_{n-1}(\mathbf{x})))$ must be a constant and would not change respect to any space location $\mathbf{x}\in\mathcal{X}\subseteq\mathbb{R}^d$. To have this, $f_{n-1}(\mathbf{x})$ must be a constant for any $\mathbf{x}\in\mathcal{X}\subseteq\mathbb{R}^d$.

$\Leftarrow$) Let $f_{n-1}(\mathbf{x})$ be a constant. This implies that covariance parameter $\bm{\phi}_{n}(f_{n-1}(\mathbf{x})))$ in $k_{n}(\mathbf{x};\bm{\phi}_{n}(f_{n-1}(\mathbf{x})))$ is a constant too and would not change respect to the space location $\mathbf{x}$. As a result $k_n(.,.)$ would be a stationary covariance function and $f_n(\mathbf{x})\vert f_{n-1}(\mathbf{x})$ is a stationary GP.
\end{proof}

Proof for Corollary \ref{eq:cor}\\[-40pt]
\begin{proof}
$\Rightarrow$) Let $y(\mathbf{x})\sim \gp(\mathbf{0}, k(\mathbf{x};\bm{\phi}))$ where  $k(.,.)$ is a stationary covariance function with a fixed parameter $\bm{\phi}$. If we define $f_1(\mathbf{x})=\bm{\phi}$ for any $\mathbf{x}\in\mathcal{X}\subseteq\mathbb{R}^d$, then $y(\mathbf{x})\vert f_{1}(\mathbf{x})\sim \gp(\mathbf{0}, k(\mathbf{x};f_1(\mathbf{x})))$. Hence under the definition in (\ref{eq:66}), $y(.)$ is a DGP with one constant hidden layer.

$\Leftarrow$) It is clear by proposition \ref{eq:dfix}. \end{proof}

Proof for Proposition \ref{pro:alpha}\\[-40pt]
\begin{proof}
Let $\alpha=0$. Then the length scale function $H(f_{n-1}(\mathbf{x}))=\mathrm{exp}(\alpha f_{n-1}(\mathbf{x}))=1$ for all $\mathbf{x}\in\mathcal{X}\subseteq\mathbb{R}^d$, i.e. length scale function does not change through the input space $\mathcal{X}$ in each layer. It follows that the prefactor in (\ref{eq:paci2}) equals $1$ and the square root of the quadratic form in (\ref{eq:rootsq}) becomes
$\sqrt{Q(\mathbf{x},\mathbf{x}^\prime)}=\parallel\mathbf{x}-\mathbf{x}^\prime\parallel_2$. As a result, for $n>1$  $k_n(\mathbf{x},\mathbf{x}';f_{n-1}(\mathbf{x}),f_{n-1}(\mathbf{x}'))=\sigma_n^2~\rho(\parallel\mathbf{x}-\mathbf{x}^\prime\parallel_2)$, a stationary covariance function. This is where we reach stationarity in the DGP. 
As $\alpha$ increases, the degree of the smoothness of the DGP layers get smaller. To show this, we investigate correlation between any two outputs at $\mathbf{x},\mathbf{x}'\in\mathcal{X}\subseteq\mathbb{R}^d$ in three situations. 

Let $\alpha\rightarrow\infty$. (i) If $f_{n-1}(\mathbf{x}), f_{n-1}(\mathbf{x}')>0$, then $H(f_{n-1}(\mathbf{x}))\rightarrow\infty$ and $H(f_{n-1}(\mathbf{x}'))\rightarrow\infty$ and the quadratic form $Q(\mathbf{x},\mathbf{x}')$ in (\ref{eq:rootsq}) goes to zero. Since $\rho(0)=1$, then $\rho(\sqrt{Q(\mathbf{x},\mathbf{x}')})\rightarrow1$. Furthermore, the perfactor in (\ref{eq:paci2}) can be written as \\[-15pt]
\begin{equation}
2^{d/2} \Bigg[\frac{\mathrm{exp}(\alpha f(\mathbf{x}))~\mathrm{exp}(\alpha f(\mathbf{x^\prime}))}{(\mathrm{exp}(\alpha f(\mathbf{x}))+\mathrm{exp}(\alpha f(\mathbf{x^\prime})))^2} \Bigg]^{d/4}.
\label{eq:mpre}
\end{equation} 
Since $\mathrm{exp}(.)$ is a non-negative function, the denominator in (\ref{eq:mpre}) goes to infinity faster than the numerator and the prefactor goes to zero. This implies that $k_n(\mathbf{x},\mathbf{x}';f_{n-1}(\mathbf{x}),f_{n-1}(\mathbf{x}'))\rightarrow 0$ at any inputs $\mathbf{x}$ and $\mathbf{x}'$, thereby having less smooth realizations. (ii) If $f_{n-1}(\mathbf{x})>0, f_{n-1}(\mathbf{x}')<0$, similar to the case (i), $k_n(\mathbf{x},\mathbf{x}';f_{n-1}(\mathbf{x}),f_{n-1}(\mathbf{x}'))\rightarrow 0$. (iii) If $f_{n-1}(\mathbf{x}), f_{n-1}(\mathbf{x}')$ $<0$, then $H(f_{n-1}(\mathbf{x}))\rightarrow 0$ and $H(f_{n-1}(\mathbf{x}'))\rightarrow 0$ and as a result the quadratic form $Q(\mathbf{x},\mathbf{x}')$ in (\ref{eq:rootsq}) goes to $\infty$. Since $\mathrm{lim}_{r\rightarrow\infty}\rho(r)=0$ (by Proposition 2 in \citet{Dunlop}), then $\rho(\sqrt{Q(\mathbf{x},\mathbf{x}')})\rightarrow 0$. Hence $k_n(\mathbf{x},\mathbf{x}';f_{n-1}(\mathbf{x}),f_{n-1}(\mathbf{x}'))\rightarrow 0$ at any inputs $\mathbf{x}$ and $\mathbf{x}'$, thereby having rougher realizations again. 
\end{proof}

Derivation for marginalizing inducing variables in (\ref{eqn:mar1}) as \\[-25pt] $$\int \PP(\mathbf{f}_n\vert\mathbf{u}_n;\mathbf{f}_{n-1},\mathbf{X},\mathbf{Z}_{n},\boldsymbol{\delta}_{\mathbf{Z}_n},\alpha)q(\mathbf{u}_n)~d\mathbf{u_n}=\Nagr(\mathbf{f}_n \vert \tilde{\boldsymbol{\mu}}_n,\tilde{\mathbf{\Sigma}}_n),\\[-10pt]$$ for $n=2,\dots,N$ where \\[-30pt]
$$[\tilde{\boldsymbol{\mu}}_n]_i=\boldsymbol{\Gamma}_n(\mathbf{x}_i,\mathbf{f}_{n-1}^i)^T\mathbf{m}_n,\\[-15pt]$$
$$[\tilde{\mathbf{\Sigma}}_n]_{ij}=k_n(\mathbf{x}_i,\mathbf{x}_j;\mathbf{f}_{n-1}^i,\mathbf{f}_{n-1}^j,\alpha)-\boldsymbol{\Gamma}_n(\mathbf{x}_i,\mathbf{f}_{n-1}^i)^T\Big[k_n(\mathbf{Z}_n,\mathbf{Z}_n;\boldsymbol{\delta}_{\mathbf{Z}_n},\alpha)-\mathbf{s}_n\Big]\boldsymbol{\Gamma}_n(\mathbf{x}_j,\mathbf{f}_{n-1}^j),\\[-15pt]$$
$$\boldsymbol{\Gamma}_n(\mathbf{x}_i,\mathbf{f}_{n-1}^i)=k_n(\mathbf{Z}_n,\mathbf{Z}_n;\boldsymbol{\delta}_{\mathbf{Z}_n},\alpha)^{-1}k_n(\mathbf{Z}_n,\mathbf{x}_i;\boldsymbol{\delta}_{\mathbf{Z}_n},\mathbf{f}_{n-1}^i,\alpha).$$
\begin{proof} 
We have $\PP(\mathbf{f}_n\vert\mathbf{u}_n;\mathbf{f}_{n-1},\mathbf{X},\mathbf{Z}_{n},\boldsymbol{\delta}_{\mathbf{Z}_n},\alpha)=\Nagr(\mathbf{f}_n \vert \boldsymbol{\mu}_n,\mathbf{\Sigma}_n)$, where \\[-25pt]
$$[{\boldsymbol{\mu}_n}]_i=\boldsymbol{\Gamma}_n(\mathbf{x}_i,\mathbf{f}_{n-1}^i)^T\mathbf{u}_n,\\[-15pt]$$
$$[\mathbf{\Sigma}_n]_{ij}=k_n(\mathbf{x}_i,\mathbf{x}_j;\mathbf{f}_{n-1}^i,\mathbf{f}_{n-1}^j,\alpha)-\boldsymbol{\Gamma}_n(\mathbf{x}_i,\mathbf{f}_{n-1}^i)^T k_n(\mathbf{Z}_n,\mathbf{Z}_n;\boldsymbol{\delta}_{\mathbf{Z}_n},\alpha)\boldsymbol{\Gamma}_n(\mathbf{x}_j,\mathbf{f}_{n-1}^j).\\[-10pt]$$
For simplifying the notations we assume that \\[-25pt]
$$\boldsymbol{\mu}_n=\mathbf{\Sigma}_{\mathbf{f}_n\mathbf{u}_n}^T\mathbf{\Sigma}_{\mathbf{u}_n}^{-1}\mathbf{u}_n,\\[-15pt]$$
$$\mathbf{\Sigma}_n=\mathbf{\Sigma}_{\mathbf{f}_n}-\mathbf{\Sigma}_{\mathbf{f}_n\mathbf{u}_n}^T\mathbf{\Sigma}_{\mathbf{u}_n}^{-1}\mathbf{\Sigma}_{\mathbf{f}_n\mathbf{u}_n}.\\[-15pt]$$
Hence, $\tilde{\mathbf{\Sigma}}_n$ and $\tilde{\boldsymbol{\mu}}_n$ can be simplified as \\[-25pt]
$$\tilde{\boldsymbol{\mu}}_n=\mathbf{\Sigma}_{\mathbf{f}_n\mathbf{u}_n}^T\mathbf{\Sigma}_{\mathbf{u}_n}^{-1}\mathbf{m}_n,\\[-15pt]$$
$$\tilde{\mathbf{\Sigma}}_n=\mathbf{\Sigma}_n + \mathbf{\Sigma}_{\mathbf{f}_n\mathbf{u}_n}^T\mathbf{\Sigma}_{\mathbf{u}_n}^{-1} \mathbf{s}_n \mathbf{\Sigma}_{\mathbf{u}_n}^{-1}\mathbf{\Sigma}_{\mathbf{f}_n\mathbf{u}_n}.\\[-15pt]$$

It is clear that the conditional covariance matrix $\mathbf{\Sigma}_n$ does not involve $\mathbf{u}_n$, whereas $\boldsymbol{\mu}_n$ is a linear function of $\mathbf{u}_n$. Since $q(\mathbf{u}_n)=\Nagr(\mathbf{m}_n,\mathbf{s}_n)$, so we can write $q(\mathbf{f}_n)$ as \\[-45pt]
\begin{align*}
q(\mathbf{f}_n)&=\int \PP(\mathbf{f}_n\vert\mathbf{u}_n;\mathbf{f}_{n-1},\mathbf{X},\mathbf{Z}_{n},\boldsymbol{\delta}_{\mathbf{Z}_n},\alpha) q(\mathbf{u}_n) d\mathbf{u_n}\\
&=\int \frac{1}{{2\pi}^{(n_{S}+m)/2}|\mathbf{\Sigma}_n|^{1/2}|\mathbf{s}_n|^{1/2}} ~exp(-\frac{1}{2}Q)d\mathbf{u_n}\\
&=\frac{1}{{2\pi}^{(n_{S}+m)/2}|\mathbf{\Sigma}_n|^{1/2}|\mathbf{s}_n|^{1/2}}\int exp(-\frac{1}{2}Q)d\mathbf{u_n},
\end{align*}
 where \\[-40pt]
 \begin{align*}
 Q&=[(\mathbf{f}_n-\boldsymbol{\mu}_n)^T\mathbf{\Sigma}_n^{-1}(\mathbf{f}_n-\boldsymbol{\mu}_n)]+[(\mathbf{u}_n-\mathbf{m}_n)^T\mathbf{s}_n^{-1}(\mathbf{u}_n-\mathbf{m}_n)]\\
 &=[(\mathbf{f}_n-\mathbf{\Sigma}_{\mathbf{f}_n\mathbf{u}_n}^T\mathbf{\Sigma}_{\mathbf{u}_n}^{-1}\mathbf{u}_n)^T\mathbf{\Sigma}_n^{-1}(\mathbf{f}_n-\mathbf{\Sigma}_{\mathbf{f}_n\mathbf{u}_n}^T\mathbf{\Sigma}_{\mathbf{u}_n}^{-1}\mathbf{u}_n)]+[(\mathbf{u}_n-\mathbf{m}_n)^T\mathbf{s}_n^{-1}(\mathbf{u}_n-\mathbf{m}_n)]\\
 &=A+B, \\[-50pt]
\end{align*}
and \\[-30pt] 
 $$A=\mathbf{f}_n^T\mathbf{\Sigma}_n^{-1}\mathbf{f}_n-2 \mathbf{f}_n^T\mathbf{\Sigma}_n^{-1}\mathbf{\Sigma}_{\mathbf{f}_n\mathbf{u}_n}^T\mathbf{\Sigma}_{\mathbf{u}_n}^{-1}\mathbf{u}_n + \mathbf{u}_n^T\mathbf{\Sigma}_{\mathbf{u}_n}^{-1}\mathbf{\Sigma}_{\mathbf{f}_n\mathbf{u}_n}\mathbf{\Sigma}_n^{-1}\mathbf{\Sigma}_{\mathbf{f}_n\mathbf{u}_n}^T\mathbf{\Sigma}_{\mathbf{u}_n}^{-1}\mathbf{u}_n,\\[-15pt] $$
 $$B=\mathbf{u}_n^T\mathbf{s}_n^{-1}\mathbf{u}_n-2\mathbf{u}_n^T\mathbf{s}_n^{-1}\mathbf{m}_n+\mathbf{m}_n^T\mathbf{s}_n^{-1}\mathbf{m}_n.\\[-15pt] $$
Since the first term of $A$ and the last term of $B$ do not depend on ${\mathbf{u}_n}$, so $q(\mathbf{f}_n)$ can reach to this form \\[-25pt] 
$$q(\mathbf{f}_n)=\frac{1}{{2\pi}^{(n_{S}+m)/2}|\mathbf{\Sigma}_n|^{1/2}|\mathbf{s}_n|^{1/2}}exp(-\frac{1}{2}[\mathbf{f}_n^T\mathbf{\Sigma}_n^{-1}\mathbf{f}_n+\mathbf{m}_n^T\mathbf{s}_n^{-1}\mathbf{m}_n])\int exp(-\frac{1}{2}Q^\prime)d\mathbf{u},\\[-15pt] 
$$
where \\[-45pt] 
\begin{align*}
Q^\prime&=-2 \mathbf{f}_n^T\mathbf{\Sigma}_n^{-1}\mathbf{\Sigma}_{\mathbf{f}_n\mathbf{u}_n}^T\mathbf{\Sigma}_{\mathbf{u}_n}^{-1}\mathbf{u}_n + \mathbf{u}_n^T\mathbf{\Sigma}_{\mathbf{u}_n}^{-1}\mathbf{\Sigma}_{\mathbf{f}_n\mathbf{u}_n}\mathbf{\Sigma}_n^{-1}\mathbf{\Sigma}_{\mathbf{f}_n\mathbf{u}_n}^T\mathbf{\Sigma}_{\mathbf{u}_n}^{-1}\mathbf{u}_n+\mathbf{u}_n^T\mathbf{s}_n^{-1}\mathbf{u}_n-2\mathbf{u}_n^T\mathbf{s}_n^{-1}\mathbf{m}_n\\
&=-2[\mathbf{f}_n^T\mathbf{\Sigma}_n^{-1}\mathbf{\Sigma}_{\mathbf{f}_n\mathbf{u}_n}^T\mathbf{\Sigma}_{\mathbf{u}_n}^{-1}+\mathbf{m}_n^T\mathbf{s}_n^{-1}]\mathbf{u}_n + \mathbf{u}_n^T[\mathbf{\Sigma}_{\mathbf{u}_n}^{-1}\mathbf{\Sigma}_{\mathbf{f}_n\mathbf{u}_n}\mathbf{\Sigma}_n^{-1}\mathbf{\Sigma}_{\mathbf{f}_n\mathbf{u}_n}^T\mathbf{\Sigma}_{\mathbf{u}_n}^{-1}+\mathbf{s}_n^{-1}] \mathbf{u}_n.\\[-45pt] 
\end{align*}
It follows that \\[-25pt] 
$$\frac{1}{2}Q^\prime=-[\mathbf{f}_n^T\mathbf{\Sigma}_n^{-1}\mathbf{\Sigma}_{\mathbf{f}_n\mathbf{u}_n}^T\mathbf{\Sigma}_{\mathbf{u}_n}^{-1}+\mathbf{m}_n^T\mathbf{s}_n^{-1}]\mathbf{u}_n + \frac{1}{2} \mathbf{u}_n^T[\mathbf{\Sigma}_{\mathbf{u}_n}^{-1}\mathbf{\Sigma}_{\mathbf{f}_n\mathbf{u}_n}\mathbf{\Sigma}_n^{-1}\mathbf{\Sigma}_{\mathbf{f}_n\mathbf{u}_n}^T\mathbf{\Sigma}_{\mathbf{u}_n}^{-1}+\mathbf{s}_n^{-1}] \mathbf{u}_n.\\[-15pt] $$

The multivariate generalization of a mathematical trick known as "completion of squares" says that for a symmetric, non-singular matrix $\mathbf{A}$, the quadratic function can be written as \\[-30pt] 
$$\frac{1}{2}\mathbf{Z}^T\mathbf{A}\mathbf{Z} -\mathbf{b}^T\mathbf{Z}+\mathbf{C}=\frac{1}{2}(\mathbf{Z}-\mathbf{A}^{-1}\mathbf{b})^T\mathbf{A}(\mathbf{Z}-\mathbf{A}^{-1}\mathbf{b})-\frac{1}{2}\mathbf{b}^T\mathbf{A}^{-1}\mathbf{b}+\mathbf{C}.\\[-15pt] $$
Now, we can apply this trick in our situation by these assumptions \\[-25pt] 
$$\mathbf{Z}:=\mathbf{u}_n~~,~~\mathbf{C}:=\mathbf{0},\\[-15pt] $$
$$\mathbf{b}:=[\mathbf{f}_n^T\mathbf{\Sigma}_n^{-1}\mathbf{\Sigma}_{\mathbf{f}_n\mathbf{u}_n}^T\mathbf{\Sigma}_{\mathbf{u}_n}^{-1}+\mathbf{m}_n^T\mathbf{s}_n^{-1}]^T,\\[-15pt] $$
$$\mathbf{A}:=\mathbf{\Sigma}_{\mathbf{u}_n}^{-1}\mathbf{\Sigma}_{\mathbf{f}_n\mathbf{u}_n}\mathbf{\Sigma}_n^{-1}\mathbf{\Sigma}_{\mathbf{f}_n\mathbf{u}_n}^T\mathbf{\Sigma}_{\mathbf{u}_n}^{-1}+\mathbf{s}_n^{-1}.\\[-10pt] $$
By application of Aitken's integral and this fact that $\mathbf{b}$ in our setting does not involve $\mathbf{u}_n$, $q(\mathbf{f}_n)$ can be simplified as \\[-45pt] 
\begin{align*}
q(\mathbf{f}_n)&=\frac{{2\pi}^{m/2}|\mathbf{A}^{-1}|^{1/2}}{{2\pi}^{(n_{S}+m)/2}|\mathbf{\Sigma}_n|^{1/2}|\mathbf{s}_n|^{1/2}}exp(-\frac{1}{2}[\mathbf{f}_n^T\mathbf{\Sigma}_n^{-1}\mathbf{f}_n+\mathbf{m}_n^T\mathbf{s}_n^{-1}\mathbf{m}_n-\mathbf{b}^T\mathbf{A}^{-1}\mathbf{b}])\\
&=\frac{|\mathbf{A}^{-1}|^{1/2}}{{2\pi}^{n_{S}/2}|\mathbf{\Sigma}_n|^{1/2}|\mathbf{s}_n|^{1/2}}exp(-\frac{1}{2}[\mathbf{f}_n^T\mathbf{\Sigma}_n^{-1}\mathbf{f}_n+\mathbf{m}_n^T\mathbf{s}_n^{-1}\mathbf{m}_n-\mathbf{b}^T\mathbf{A}^{-1}\mathbf{b}]).\\[-45pt] 
\end{align*}
Now by showing \\[-25pt] 
$$[(\mathbf{f}_n-\tilde{\boldsymbol{\mu}}_n)^T\tilde{\mathbf{\Sigma}}_n^{-1}(\mathbf{f}_n-\tilde{\boldsymbol{\mu}}_n)]=[\mathbf{f}_n^T\mathbf{\Sigma}_n^{-1}\mathbf{f}_n+\mathbf{m}_n^T\mathbf{s}_n^{-1}\mathbf{m}_n-\mathbf{b}^T\mathbf{A}^{-1}\mathbf{b}],\\[-15pt] $$
and 
$$|\tilde{\mathbf{\Sigma}}_n|^{-1/2}=\frac{|\mathbf{A}^{-1}|^{1/2}}{|\mathbf{\Sigma}_n|^{1/2}|\mathbf{s}_n|^{1/2}},$$
the proof will be completed. To proof the first equation, we start from the left hand side (LHS) to reach the the right hand side (RHS). Using the Woodbury identity, $\tilde{\mathbf{\Sigma}}_n^{-1}$ can be writen as \\[-45pt] 
\begin{align*}
\tilde{\mathbf{\Sigma}}_n^{-1}&=\mathbf{\Sigma}_n^{-1}-\mathbf{\Sigma}_n^{-1}\mathbf{\Sigma}_{\mathbf{f}_n\mathbf{u}_n}^T\mathbf{\Sigma}_{\mathbf{u}_n}^{-1}\big(\mathbf{\Sigma}_{\mathbf{u}_n}^{-1}\mathbf{\Sigma}_{\mathbf{f}_n\mathbf{u}_n}\mathbf{\Sigma}_n^{-1}\mathbf{\Sigma}_{\mathbf{f}_n\mathbf{u}_n}^T\mathbf{\Sigma}_{\mathbf{u}_n}^{-1}+\mathbf{s}_n^{-1}\big)^{-1}\mathbf{\Sigma}_{\mathbf{u}_n}^{-1}\mathbf{\Sigma}_{\mathbf{f}_n\mathbf{u}_n}\mathbf{\Sigma}_n^{-1}\\
&=\mathbf{\Sigma}_n^{-1}-\mathbf{\Sigma}_n^{-1}\mathbf{\Sigma}_{\mathbf{f}_n\mathbf{u}_n}^T\mathbf{\Sigma}_{\mathbf{u}_n}^{-1}\mathbf{A}^{-1}\mathbf{\Sigma}_{\mathbf{u}_n}^{-1}\mathbf{\Sigma}_{\mathbf{f}_n\mathbf{u}_n}\mathbf{\Sigma}_n^{-1}.\\[-45pt] 
\end{align*}
By plugging $\tilde{\boldsymbol{\mu}}_n$ and $\tilde{\mathbf{\Sigma}}_n^{-1}$ into the LHS and using equations \\[-25pt] 
$$\mathbf{\Sigma}_{\mathbf{u}_n}^{-1}\mathbf{\Sigma}_{\mathbf{f}_n\mathbf{u}_n}\mathbf{\Sigma}_n^{-1}\mathbf{f}_n=\mathbf{b}-\mathbf{s}_n^{-1}\mathbf{m}_n,\\[-15pt] $$
and \\[-25pt] 
$$\mathbf{\Sigma}_{\mathbf{u}_n}^{-1}\mathbf{\Sigma}_{\mathbf{f}_n\mathbf{u}_n}\mathbf{\Sigma}_n^{-1}\mathbf{\Sigma}_{\mathbf{f}_n\mathbf{u}_n}^T\mathbf{\Sigma}_{\mathbf{u}_n}^{-1}=\mathbf{A}-\mathbf{s}_n^{-1},\\[-15pt] $$
we reach\\[-45pt] 
\begin{align*}
LHS&=\mathbf{f}_n^T\mathbf{\Sigma}_n^{-1}\mathbf{f}_n-(\mathbf{b}-\mathbf{s}_n^{-1}\mathbf{m}_n)^T\mathbf{m}_n-(\mathbf{b}-\mathbf{s}_n^{-1}\mathbf{m}_n)^T\mathbf{A}^{-1}(\mathbf{b}-\mathbf{s}_n^{-1}\mathbf{m}_n)\\
&+2\mathbf{m}_n^T(\mathbf{A}-\mathbf{s}_n^{-1})\mathbf{A}^{-1}(\mathbf{b}-\mathbf{s}_n^{-1}\mathbf{m}_n)-\mathbf{m}_n^T(\mathbf{b}-\mathbf{s}_n^{-1}\mathbf{m}_n)+\mathbf{m}_n^T(\mathbf{A}-\mathbf{s}_n^{-1})\mathbf{m}_n\\
&-\mathbf{m}_n^T(\mathbf{A}-\mathbf{s}_n^{-1})\mathbf{A}^{-1}(\mathbf{A}-\mathbf{s}_n^{-1})\mathbf{m}_n\\
&=\mathbf{f}_n^T\mathbf{\Sigma}_n^{-1}\mathbf{f}_n+\mathbf{m}_n^T\mathbf{s}_n^{-1}\mathbf{m}_n-\mathbf{b}^T\mathbf{A}^{-1}\mathbf{b}\\
&=RHS.
\end{align*}
The next equation can be written as \\[-45pt] 
\begin{align*}
|\tilde{\mathbf{\Sigma}}_n^{-1}|^{1/2}
&=|\mathbf{A}^{-1}|^{1/2}|\mathbf{\Sigma}_n^{-1}|^{1/2}|\mathbf{s}_n^{-1}|^{1/2},\\[-55pt] 
\end{align*}
using this fact that\\[-25pt] 
$$\tilde{\mathbf{\Sigma}}_n^{-1}=\mathbf{\Sigma}_n^{-1}-\mathbf{\Sigma}_n^{-1}\mathbf{\Sigma}_{\mathbf{f}_n\mathbf{u}_n}^T\mathbf{\Sigma}_{\mathbf{u}_n}^{-1}\mathbf{A}^{-1}\mathbf{\Sigma}_{\mathbf{u}_n}^{-1}\mathbf{\Sigma}_{\mathbf{f}_n\mathbf{u}_n}\mathbf{\Sigma}_n^{-1}.\\[-15pt] $$
\end{proof}

Derivation of the ELBO of our DGP in \ref{eq:dgpelbo}
\begin{proof} The evidence lower bound of our DGP, $\Lagr_{DGP}$, is defined as following\\[-20pt] 
$$\Lagr_{DGP}=\EE_{q(\{\mathbf{f}_n, \mathbf{u}_n\}_{n=1}^{N},\alpha)}\textrm{log}\Bigg(\frac{\PP(\mathbf{y},\{\mathbf{f}_n, \mathbf{u}_n\}_{n=1}^{N},\alpha)}{q(\{\mathbf{f}_n, \mathbf{u}_n\}_{n=1}^{N},\alpha)}\Bigg)$$
By substituting (\ref{eq:jp3}), (\ref{eq:12al}) in the RHS of the above equation, $\Lagr_{DGP}$ can be written as \\[-20pt] 
$$=\int \cdots \int q(\{\mathbf{f}_n,
\mathbf{u}_n\}_{n=1}^{N},\alpha)~ \textrm{log}\Bigg(\PP(\mathbf{y}\vert \mathbf{f}_N) \times\frac{\PP(\mathbf{u}_1;\mathbf{Z}_{1})}{q(\mathbf{u}_1)} \times \frac{\prod_{n=2}^{N}\PP(\mathbf{u}_n\vert\mathbf{Z}_{n},\boldsymbol{\delta}_{\mathbf{Z}_n},\alpha)}{\prod_{n=2}^{N}q(\mathbf{u}_n)} \times \frac{\PP(\alpha)}{q(\alpha)}\Bigg)$$
$~~d\{\mathbf{f}_n, \mathbf{u}_n\}_{n=1}^{N} d\alpha $ \\[-20pt] 
$$=\int \cdots \int q(\{\mathbf{f}_n\}_{n=1}^{N},\alpha) ~ \textrm{log}\Big(\prod_{i=1}^{n_S}\PP(y_i\vert \mathbf{f}_{N}^i)\Big) ~d\{\mathbf{f}_n\}_{n=1}^{N} d\alpha + \int q(\mathbf{u}_1) \textrm{log} \Big(\frac{\PP(\mathbf{u}_1;\mathbf{Z}_{1})}{q(\mathbf{u}_1)}\Big) d\mathbf{u}_1$$
$$+\int \cdots\int q(\{\mathbf{u}_n\}_{n=2}^{N},\alpha) \textrm{log}\Bigg(\frac{\prod_{n=2}^{N}\PP(\mathbf{u}_n\vert\mathbf{Z}_{n},\boldsymbol{\delta}_{\mathbf{Z}_n},\alpha)}{\prod_{n=2}^{N}q(\mathbf{u}_n)}\Bigg)d\{ \mathbf{u}_n\}_{n=2}^{N}d\alpha + \int q(\alpha)\textrm{log}\Big(\frac{\PP(\alpha)}{q(\alpha)}\Big)d\alpha.$$
Therefore, the ELBO of the DGP can be formed as \\[-25pt] 

$$\Lagr_{DGP}=\sum_{i=1}^{n_S} \EE_{q(\{\mathbf{f}_n\}_{n=1}^{N},\alpha)}\Big(~\textrm{log}~\PP(y_i\vert \mathbf{f}_{N}^i)~\Big) - \textrm{KL}\Big(q(\mathbf{u}_1)\parallel \PP(\mathbf{u}_1;\mathbf{Z}_{1})\Big)\\[-10pt] $$

$$~~~~~~~- \EE_{q(\alpha)}\Big[ \sum_{n=2}^{N}\textrm{KL}\Big(q(\mathbf{u}_n) \parallel \PP(\mathbf{u}_n;\mathbf{Z}_{n},\boldsymbol{\delta}_{\mathbf{Z}_n},\alpha)\Big)\Big]-\textrm{KL}\Big(q(\alpha) \parallel \PP(\alpha)\Big).$$
\end{proof}

\end{document}